\newcommand{\etal}{et al.\ }
\newcommand{\R}{\mathbb{R}}
\newtheorem{theorem}{Theorem}[section]
\newtheorem{lemma}[theorem]{Lemma}
\newtheorem{observation}[theorem]{Observation}
\newtheorem{corollary}[theorem]{Corollary}
\newtheorem{claim}[theorem]{Claim}
\newtheorem{remark}[theorem]{Remark}
\newtheorem{example}{Example}
\def\+#1{\mathcal{#1}}
\newcommand{\opt}{\mathsf{opt}}
\newcommand{\norm}[1]{\lVert#1\rVert}
\newcommand{\abs}[1]{\AutoAdjust{\lvert}{#1}{\rvert}}
\newcommand{\be}{\begin{equation}}
\newcommand{\ee}{\end{equation}}
\newcommand{\beq}{\begin{equation*}}
\newcommand{\eeq}{\end{equation*}}
\newcommand{\eps}{\varepsilon}
\newcommand{\AutoAdjust}[3]{\mathchoice{ \left #1 #2  \right #3}{#1 #2 #3}{#1 #2 #3}{#1 #2 #3} }
\newcommand{\Xcomment}[1]{{}}
\newcommand{\InBrackets}[1]{\AutoAdjust{[}{#1}{]}}%
\newcommand{\Ex}[2][]{\operatorname{\mathbb E}_{#1}\InBrackets{#2}}
\newcommand{\Exlong}[2][]{\operatornamewithlimits{\mathbb E}\limits_{#1}\InBrackets{#2}}
\newcommand{\Prx}[2][]{\operatorname{\mathbf{Pr}}_{#1}\InBrackets{#2}}
\newcommand{\eqdef}{\overset{\mathrm{def}}{=\mathrel{\mkern-3mu}=}}
\newcommand{\vect}[1]{\ensuremath{\mathbf{#1}}}
\newcommand{\metric}{\mathcal{M}}
\newcommand{\elli}[1][i]{\ell_{#1}}
\newcommand{\ells}{\ensuremath{\boldsymbol\ell}}
\newcommand{\ellsmi}[1][i]{\ells_{\text{-}#1}}
\DeclareMathOperator{\SC}{\textsf{SC}}
\DeclareMathOperator{\cost}{\textsf{cost}}
\newcommand{\costi}[1][i]{\cost_{#1}}
\newcommand{\shortpath}{\textsf{S-Path}}
\newcommand{\locs}{\ensuremath{\boldsymbol\ell}}
\newcommand{\locsmi}[1][i]{\ensuremath{\boldsymbol\ell}_{\text{-}#1}}
\newcommand{\rli}[1][i]{\widetilde{\ell}_{#1}}  
\newcommand{\rlj}[1][j]{\widetilde{\ell}_{#1}}  %
\newcommand{\rls}{\widetilde{\boldsymbol{\ell}}}
\newcommand{\rlsmi}[1][i]{\widetilde{\boldsymbol{\ell}}_{\text{-}#1}}
\newcommand{\po}{\hat o}  %
\newcommand{\pa}{\hat a}  %
\newcommand{\pac}{a}  %
\newcommand{\pam}{\Delta}  %
\newcommand{\harm}{\texttt{Harmonic}}
\newcommand{\dictator}{\texttt{RD}}
\newcommand{\tdi}[1][i]{t_{#1}}
\newcommand{\denom}{D}
\newcommand{\denomr}{\widetilde{D}}
\newcommand{\distsmi}[1][i]{\mathcal{F}_{\text{-}#1}}
\newcommand{\br}{\textsf{BR}}
\newcommand{\bri}[1][i]{\br_{#1}}
\definecolor{WildStrawberry}{RGB}{255,67,164}
\def\DEBUG{true}
	\def\rem#1{{\marginpar{\raggedright\scriptsize #1}}}
	\newcommand{\sjr}[1]{\rem{\small\textcolor{red}{$\bullet${\tiny #1}}}}
	\newcommand{\remove}[1]{{\color{lightgray} #1}}
	\newcommand{\sjr}[1]{}
\def\DEBUG{true}
	\def\rem#1{{\marginpar{\raggedright\scriptsize #1}}}
	\newcommand{\msr}[1]{\rem{\small\textcolor{red}{$\bullet${\tiny #1}}}}
	\newcommand{\msr}[1]{}
\begin{document}

\sloppy

\title{Strategic Facility Location via Predictions}

\author{
Qingyun Chen \thanks{ Electrical Engineering and Computer Science, University of California, 5200 N. Lake Road, Merced CA 95344. \texttt{qchen41@ucmerced.edu}.}  
\and 
Nick Gravin\thanks{ Key Laboratory of Interdisciplinary Research of Computation and Economics, Shanghai University of Finance and Economics, Shanghai, China. \texttt{anikolai@mail.shufe.edu.cn}.
} 
\and
Sungjin Im\thanks{ Electrical Engineering and Computer Science, University of California, 5200 N. Lake Road, Merced CA 95344. \texttt{sim3@ucmerced.edu}.}  
}

\date{}
\maketitle

\begin{abstract}

The facility location with strategic agents is a canonical problem in the literature on mechanism design without money. Recently, 
Agrawal et. al.~\cite{agrawal2022learning} considered this problem
in the context of machine learning augmented algorithms, where the mechanism designer is also given a prediction of the optimal facility location. An ideal mechanism in this framework produces an outcome that is close to the social optimum when the prediction is accurate (consistency) and gracefully degrades as the prediction deviates from the truth, while retaining some of the worst-case approximation guarantees (robustness). The previous work only addressed this problem in the two-dimensional Euclidean space providing optimal trade-offs between robustness and consistency guarantees for deterministic mechanisms. 

We consider the problem for \emph{general} metric spaces. Our only assumption is that the metric is continuous, meaning that any pair of points must be connected by a continuous shortest path. We introduce a novel mechanism that in addition to agents' reported locations takes a predicted optimal facility location $\hat{o}$. We call this mechanism $\texttt{Harmonic}$, as it selects one of the reported locations $\tilde{\ell}_i$ with probability inversely proportional to 
$d(\hat{o},\tilde{\ell}_i)+ \Delta$ for a constant parameter $\Delta$. While \harm \ mechanism is not truthful, we can \emph{characterize the set of undominated strategies} for each agent $i$ as solely consisting of the points on a shortest path from their true location $\ell_i$ to the predicted location $\hat{o}$. We further derive \emph{consistency and robustness guarantees on the Price of Anarchy (PoA)} for the game induced by the mechanism. Specifically, assuming that $\Delta=\frac{\varepsilon}{2n}\cdot\sum_{i\in [n]}d(\hat{o},\ell_i)$ is closely related to the average distance to the predicted location $\hat{o}$, our consistency guarantee is arbitrarily close to optimum (PoA is $1+\varepsilon$), while having a constant robustness guarantee (PoA is $O(1+1/\varepsilon^3)$ in general and is $O(1+1/\varepsilon^2)$ for strictly convex metric spaces). We also show that for a constant number of agents $n=O(1)$, $\harm(\Delta)$ mechanism with $\Delta=0$ attains $1$-consistency and $O(1)$-robustness.
\end{abstract}

\section{Introduction}
\label{sec:intro}

The facility location is a canonical problem attracting a lot of interest
in many different fields such as operation research, artificial intelligence, social choice, and economics~\cite{chan2021mechanism}. Facility location with strategic customers serves as an exemplary setting of the approximate mechanism design without payments that has been extensively studied in the past fifteen years~\cite{procaccia2013approximate,AlonFPT10,LuSWZ10,EscoffierGTPS11,FeldmanW13,FotakisT10,FotakisT14,FotakisT16,SerafinoV16,procaccia2018approximation,walsh2020strategy,AzizLSW22}.

In the case of a single facility, the problem involves $n$ agents 
residing in a metric space $\metric$, each agent $i$ with their own most preferred location $\elli\in\metric$ of the single facility $f\in\metric$. The agents are strategic and may misreport their location hoping to influence the final choice of the facility $f\in\metric$ and minimize their 
personal cost given by the distance $d(\elli,f)$. The central planner collects reported locations from the agents and then decides where to open the facility. For the most common objective, a.k.a., social cost minimization, the goal is two-fold: first, the outcome should be a good approximation for minimizing the average distance from the facility to all agents if not the optimum; and second, no agents should have an incentive to misreport their true location.

The traditional approach to this problem aims to find mechanisms with worst-case performance guarantees on all possible inputs. This approach has been criticized in a broader context of algorithm design for its often pessimistic bounds and its focus on adversarial instances, which are not frequently encountered in practice~\cite{roughgarden2021beyond}. 
To overcome the practical shortcomings of  traditional algorithms, Lykouris and Vassilvitskii~\cite{LykourisV21}
have recently introduced a powerful framework of
\emph{machine learning augmented algorithms}.  ML augmented algorithms strive to surpass the limitations of traditional algorithms by incorporating predictions as supplementary input. These algorithms are supposed to do very well when furnished with accurate predictions (consistency guarantee), while gracefully degrading in performance as prediction quality declines. Ideally, they also uphold robust guarantees for all predictions, ensuring reliability even in the face of imperfect forecasts (robustness guarantee). ML augmented algorithms have found applications in strengthening online and streaming algorithms with predictions about future unknown input, as well as in accelerating algorithms by utilizing knowledge gleaned from past inputs. See 
 ~\cite{MitzenmacherV22} for a survey  of ML augmented algorithms.

 Very recently, there has been a surge of interest in applying ML augmented framework to mechanism design~\cite{XuL22,GkatzelisKST22,BalkanskiGT23,mechanismdesignaugmentedoutput,rand-facility-game} pioneered by Agrawal et al.~\cite{agrawal2022learning}.
Their work
specifically addresses the strategic facility location problem in two-dimensional Euclidean space $\metric=\R^2$.  To illustrate the setting, consider the following examples:
\begin{example}
    \label{ex:1}
 A group of $n=100$ people (say alumni from the same school) want to decide on a  meeting place (e.g., for the school reunion). Before asking people about their preferred meeting location, the meeting planner may already have a good idea about how many people still live near the school and how many 
will be coming from a greater city area; thus the planner could predict which spot might work well for everyone (either somewhere close to the school, or a well accessible place in the city center). \vspace{-2mm}
\end{example}

\begin{example}
A city wants to open a new government office to serve the public. The city likely possesses extensive data on residents, thus can find a good candidate location. 
Alternatively, it might opt for a central location as a likely near-optimal solution, even without in-depth analysis.
\end{example}

These kinds of scenarios are neatly captured by the model of~\cite{agrawal2022learning}, which assumes access to 
a prediction of the optimal location obtained via machine-learning algorithm, or other means. This assumption about the output (and not the input)\footnote{It is common in the ML augmented algorithms literature to have predictions about algorithm's input rather than output.} is particularly well suited for the mechanism design scenario in which it is usually difficult to accurately predict the input from all $n$ agents.

Agrawal et al.~\cite{agrawal2022learning} demonstrate that it is possible to design a strategy-proof mechanism that is arbitrarily close to the optimal solution in the two-dimensional setting, provided access to a quality prediction of the facility location. This is a significant advancement, as previously only a $\sqrt 2$-approximate strategy-proof mechanism was known for this setting without predictions. Their mechanism builds upon a well known coordinate-wise median-point mechanism for low dimensional Euclidean spaces $\R^d$: the predicted location $\po\in\R^2$ gets a certain weight $c\in[0,1]$  and is added to the agents' reports, the output is given by the coordinate-wise median. The parameter $c$ is based on the expected quality of predictions: if the mechanism designer anticipates high-quality predictions, they can set the parameter $c$ close to 1, resulting in a nearly 1-consistent (nearly optimal) algorithm.
Specifically, the algorithm is $(1+r)$-
consistent and $g(r)$-robust\footnote{In general, an algorithm is $\alpha$-consistent and $\beta$-robust if it is an $\alpha$-approximation when the prediction is perfect and a $\beta$-approximation for arbitrary predictions.}, where $r(c) \in (0, \infty)$ is a control-parameter that governs a trade-off between consistency and robustness and $g(r)\approx\frac{1}{\sqrt{2r}}$ is a function tending to infinity as $r \rightarrow 0$. I.e., when prediction is extremely inaccurate, the algorithm maintains a constant approximation guarantee for any fixed value of $r$, ensuring a robust solution quality. These consistency and robustness guarantees are best possible among all anonymous deterministic mechanisms~\cite{agrawal2022learning}.

In this paper, we seek to investigate if the facility location problem continues to benefit from predictions even in the \emph{general} metric setting. We ask 

\begin{quote}
\begin{center}
    \emph{
    How to use ML augmented advice in general metric spaces $\metric$?}
\end{center}
\end{quote}

\subsection{Challenges}

We discuss two primary approaches considered in the literature and the difficulties in extending them to the general metric space with predictions. 

\smallskip
\noindent \textbf{Median-point like mechanisms.}
The median-point mechanism is perhaps the most common approach for the facility location game. It assumes that the locations are in the (low-dimensional) Euclidean space and computes the median of the reported locations along each coordinate. The algorithm is strategyproof, and it is optimum for the single dimensional Euclidean space~\cite{procaccia2013approximate} and $\sqrt 2$-approximate for the two dimensional Euclidean space \cite{goel2023optimality}. 
However, the only known approximation guarantee of the coordinate-wise median is $O(\sqrt{d})$ for the Euclidean norm in $\R^d$ spaces, and the mechanism is inapplicable in general metric spaces.
In fact, it is known that any deterministic mechanism is $\Omega(n)$ factor off the optimum even in a circle metric~\cite{schummer2002strategy}. Hence, the approach of Agrawal et al.~\cite{agrawal2022learning} is only limited to low dimensional Euclidean spaces and does not extend to general metrics.

\smallskip
\noindent 
\textbf{Random Dictatorship.}
Understanding mechanism design in general metric spaces presents significant challenges, as evidenced by the limited research conducted in this area (see a survey~\cite{chan2021mechanism}). Indeed, any mechanism with a sublinear in $n$ approximation guarantee must be randomized~\cite{schummer2002strategy}. The only known $O(1)$-approximation strategyproof mechanism that works for general $\metric$ is \emph{Random Dictatorship} ($\dictator$), which chooses one of the reported locations uniformly at random. $\dictator$ is a $2(1 - 1/n)$-approximation~\cite{alon2009strategyproof}.
There are a few characterization results of strategyproof mechanisms for special metrics $\metric$ (such as trees, or cycles)~\cite{DokowFMN12,MeirLLMK21}. Also there is a characterization of group-strategyproof mechanisms~\cite{TangYZ20} in strictly convex spaces, which 
can only give $\Omega(n)$ approximation to the optimal 
social cost. Unfortunately, these results do not offer any new mechanisms that can work in general metric spaces. 
On the positive side, the ML augmented framework offers a richer space of possible mechanisms by adding extra information about the predicted location to the input. On the negative side, there are still fundamental roadblocks to having good strategyproof ML augmented mechanisms, which are discussed below.

\smallskip
\noindent 
\textbf{Possible Mechanisms.}
First, it is reasonable to restrict attention to randomized mechanisms that select facility $f$ 
from a distribution over reported locations $\ells$ and the predicted location $\po$, as we may not even know which other points general $\metric$ may or may not have. This is still a rich family of mechanisms as each selection probability $\{g_i(\cdot)\}_{i\in[n+1]}$ may depend on $O(n^2)$ pairwise distances. Unfortunately, we do not know any strategyproof mechanisms of such form beyond $\dictator$ and for a good reason: each agent controls $n$ variables with many degrees of freedom in general $\metric$;  manipulations by a single agent may affect all $g_j$ for $j\in[n+1]$ in a rather complex way even for simple and well-behaved functions $\{g_i(\cdot)\}_{i\in[n+1]}$. One can still have a strategyproof mechanism by choosing $f$ from a fixed distribution over reported locations $\ells$ and predicted optimal location $\po$. However, if $\Prx{f\gets \po}$ is not a constant independent of $n$, then we do not get any interesting consistency guarantees. On the other hand, if $\po$ is very far from each reported location $\elli$, then we do not get any robustness guarantee.

A more reasonable attempt to balance consistency and robustness is to combine $\dictator$ mechanism with a selection of $\po$ by letting agents vote on two alternative outcomes: $\dictator(\ells)$ or $\po$. This approach, however, has its own problems that can be illustrated with our previous example of finding a meeting place for $n=100$ alumni (Example~\ref{ex:1}). Imagine that a single person $i^*\in [n]$ is on vacation at a remote tropical island in another country at the time of the meeting, while the remaining $99$ people are all in 
town. Although the $\dictator$ mechanism selects $i^*$ only with a 
small $1\%$ probability, it is still incredibly 
inconvenient for anyone else. We can think of $i^*$ as an adversarial agent, who does not care about the outcome, but still has a 
unilateral power to sabotage the outcome of $\dictator(\ells)$ for everyone else and make it arbitrarily worse than the predicted location 
$\po$. Thus any approach involving voting on two alternatives $\dictator(\ells)$ or $\po$ must give each agent a unilateral power to impose $\po$ outcome. This leads to  \emph{Consensus Mechanism}: the outcome is a predicted location $\po$, unless all agents agree on $\dictator(\ells)$ outcome. The consensus mechanism has perfect $1$-consistency, since at least one agent must prefer $\po$ over $\dictator(\ells)$ when $\po$ is the optimal location. It also has a constant robustness guarantee for a small number of agents $n=O(1)$ (e.g., $n=3$), since (i) in the case when at least one agent prefers $\po$ over $\dictator(\ells)$, the prediction is a constant approximation to the optimum, and otherwise (ii) $\dictator(\ells)$ is a $2(1-1/n)$-approximation to the optimum~\cite{alon2009strategyproof}.

\smallskip
\noindent 
\textbf{Untruthful Mechanisms.} It turns out that the consensus mechanism is not strategyproof. E.g., if one agent $i$ slightly prefers $\po$ over $\dictator(\ells)$ and everyone else strongly prefers $\dictator(\ells)$, then one of the agents $j\ne i$ may greatly benefit by reporting a closer location $\rlj$ to $\elli$ and thus convince $i$ to choose $\dictator(\ellsmi[j],\rlj)$ rather than $\po$. On the other hand, any selfish deviation $\rli\ne\elli$ by agent $i$ may only improve the outcome for all other agents\footnote{Any such deviation would make everyone prefer $\dictator(\ellsmi,\rli)$ over $\po$.}. I.e., non truthful mechanisms do not necessarily lead to bad outcomes in facility location games and thus should not be excluded from consideration. Furthermore, while the literature on strategic facility location is primarily concerned with truthful mechanisms, other areas of mechanism design have encountered a fair number of non truthful mechanisms, e.g., first-price or position auctions in revenue maximization settings. For such mechanisms, a standard approach is to analyse the equilibria of the ensuing game and establish good Price of Anarchy (PoA) bounds (e.g.,~\cite{CaragiannisKKKLLT15}).

\subsection{Our Contributions}
While the consensus mechanism suggests that positive results are possible in ML augmented framework, it does not provide a satisfactory solution for the most interesting case, when the number of agents $n$ is large. Indeed, its robustness guarantee of $\Omega(n)$ scales linearly with $n$ (e.g., when one agent $i$ is located exactly at the prediction $\elli=\po$ and all other agents stay at the same far away location $\elli[j]=P\in\metric$ for all $j\ne i$) and it does not allow for any trade-offs between consistency and robustness guarantees. 

In this paper, we propose a novel mechanism, which we term \harm \ and which to the best of our knowledge has not been considered before. The mechanism chooses a reported location $\rli\in\{\rli[1],\ldots,\rli[n]\}$ with a probability \emph{inversely} proportional to its distance $d(\rli,\po)$ to the predicted location $\po$ plus a constant $\pam$, i.e., $\Prx{f \gets \rli}$ is proportional to $1/(d(\po,\rli)+\pam)$. The intuition for the inversely proportional 
selection rule is to penalize agents for reporting remote 
locations to be more robust against manipulations by adversarial agents\footnote{Recall a major issue of $\dictator(\ells)$: a single adversarial agent can make everyone's expected cost arbitrarily large. With an inversely proportional rule, reporting a remote location has a bounded impact on the expected cost of any agent.}. The parameter $\pam$ allows us to balance consistency and robustness guarantees: smaller $\pam$ means better consistency and makes \harm \ resemble the consensus mechanism, while larger $\pam$ brings \harm \ closer to the $\dictator$ mechanism.   
The hope is that if the predicted optimal location $\po$ is close to accurate, then many agents should benefit by shifting their reports closer to $\po$ (agent $i$ gets a higher chance of being selected at the reported location $\rli$ at the cost of higher distance to their true location $d(\elli,\rli)$). On the other hand,
when prediction $\po$ is highly inaccurate, the hope is that only a few agents decide to move closer to $\po$ and we get a good robustness guarantee. This is also the situation where the constant parameter $\pam$ helps to reduce the impact of a single agent on the mechanism's outcome. 

\smallskip
\noindent \textbf{Results.} 
First, while $\harm$ is not strategy-proof, we show that it significantly limits the agents' undominated action space --- in a good direction. Specifically, the best response of each agent $i$ is to report her preferred location, $\elli$ or the predicted location, $\po$, or any point on the shortest path between $\elli$ and $\po$, when she is indifferent between $\elli$ and $\po$. In other words, each agent can only choose from the shortest paths between her preferred location and the predicted location. We then prove that Pure Nash Equilibrium (PNE) always exists via Kakutani's
fixed-point theorem (Section~\ref{sec:pure_vs_mixed}). The only assumption we use is that every pair of points in the metric space is connected by a continuous shortest path. This is a rather mild assumption, as one can easily embed a given discrete metric into a continuous one by taking a convex combination over the points in the space. In particular, for any discrete metric $\metric$, one can simply extend the metric to $\overline{\metric}$ by adding a continuous shortest path between each pair of discrete points.

Second, we analyse the Price of Anarchy (PoA) of the ensuing game induced by the $\harm(\pam)$ mechanism. We assume that the trade-off parameter $\pam=c\cdot\pac$ is related to the average distance $\pac\eqdef\frac{1}{n}\sum_{i\in[n]}d(\po,\elli)$ to the prediction $\po$ for a constant $c$. 
When predicted location $\po$ is $\gamma$-approximation to the optimum, $\harm$ is $\gamma(1 + 2c)$-consistent (Theorem~\ref{thm:consistency_poa}). 
On the robustness side, we show that $\harm$ is $O(1 + 1/c^3)$-robust (Theorem~\ref{thm:harm-robust-metricspace}) in general, and give a better robustness guarantee of $O(1 + 1/c^2)$, if the metric is strictly convex (Theorem~\ref{thm:harm-robust-normspace}). I.e., for a small constant $c=\eps/2$ we achieve 
$(1+\eps)$-consistency that smoothly degrades with the prediction's accuracy, and respectively $O(1+1/\eps^3)$ and $O(1+1/\eps^2)$ robustness\footnote{Note that our consistency/robustness guarantees rely on the trade-off parameter $\pam$. I.e., we rely on the approximately correct estimation of the average distance $\pac$, which is arguably not a very strong assumption on the setting ($\pac$ is a single number with simple dependencies on each agent's individual location $\elli$).
Also note that, if $\pam$ is underestimated ($\frac{\pam}{\pac}=o(\eps)$), then we get better consistency and worse robustness; and if $\pam$ is overestimated (
$\frac{\pam}{\pac}=\omega(\eps)$), then we get better robustness.}. Furthermore, if $\pam = 0$, we obtain $O(n)$-robustness (Theorem~\ref{thm:constant_agents}) and $1$-consistency, which translates into a constant robustness and $1$-consistency for a constant $n=O(1)$ number of agents (see the discussion in the full version).

\subsection{Other Related Work}
\label{sec:related}

Mechanism design without payments often falls short of producing the optimal solution, leading to potentially suboptimal results. Agrawal et al.~\cite{agrawal2022learning} initiated the application of machine learning-augmented algorithms in this field to overcome these limitations. 
 For the utilitarian objective of minimizing the average distance of agents to the facility, their algorithm achieves $\sqrt{ 2c^2 + 2} / (1 +c)$-consistency and $\sqrt{ 2c^2 + 2} / (1  - c)$-robustness in the two-dimensional space, where $c \in [0, 1)$ is a parameter to be chosen by the mechanism. 
 As mentioned before, a $\sqrt{2}$-approximate strategy-proof mechanism is known and it is the best possible without predictions \cite{goel2023optimality}.  
 For the egalitarian social cost of minimizing the maximum distance to the facility, Agrawal et al.~\cite{agrawal2022learning} presented an algorithm with 1-consistency and $1 + \sqrt{2}$-robustness. For this setting, a 2-approximate  strategy-proof mechanism is known without predictions \cite{AlonFPT10,goel2023optimality}.  A very recent work \cite{rand-facility-game} studies randomized algorithms in one-dimensional and two-dimensional settings. For the obnoxious version of maximizing the total distance to the obnoxious facility with predictions, Istrate and Bonchis recently considered various metrics such as circles, trees, and one-dimensional and two-dimensional hypercubes \cite{istrate2022mechanism}. 

Very recently, an ML augmented mechanism was developed for the $k$-facility location game in  general metric spaces \cite{barak2024macadvicefacilitylocation}. 
However, it assumes predictions for each point, requires assigning points  to facilities in a balanced manner, and does not provide an absolute robustness guarantee.

Facility location problems have been studied (without predictions) for various metrics. See \cite{AlonFPT10} for general metrics, \cite{AlonFPT10,meir2019strategyproof} for circles, \cite{AlonFPT10,FeldmanW13} for trees, and \cite{meir2019strategyproof,walsh2020strategy,goel2023optimality,el2023strategyproofness} for $d$-dimensional Euclidean spaces. For characterization of the
space of startgy-proof mechanisms, see \cite{moulin1980strategy} and \cite{peters1993range}, which consider the one-dimensional and two-dimensional spaces, respectively. 

For the obnoxious version of maximizing the total distance to the obnoxious facility with predictions, Istrate and Bonchis recently considered various metrics such as circles, trees, and one-dimensional and two-dimensional hypercubes \cite{istrate2022mechanism}.

 Machine learning augmented algorithms have found other applications in mechanism design without payments. For the unrelated machine makespan minimization problem, Balkanski \etal gave a $O(1)$-consistent and $O(n)$-robust algorithm, which asymptotically matches the best possible worst-case bound \cite{nisan1999algorithmic,0001KK23}. Gkatzelis et. al.~\cite{GkatzelisKST22} revisited  scheduling games
and network formation games, demonstrating that leveraging predictions can improve the price of anarchy.

In parallel to Agrawal et al. \cite{agrawal2022learning}, 
Xu and Lu \cite{XuL22} studied several mechanism design problems, some of which involve payment (single item auction and frugal path auction) and some not (truthful job scheduling and two facility game on a line). Notably, they gave a mechanism that is $(1 + n/2)$-consistent and $(2n-1)$-robust for the two facility location game on a line.

Furthermore, ML augmented algorithms have found broader applications in online algorithms. Given the extensive literature on this topic, we only provide a few sample examples: paging \cite{LykourisV21},  data structures \cite{kraska2018case,mitzenmacher2018model}, knapsack \cite{ImKQP21}, load balancing \cite{lattanzi2020online}, scheduling \cite{NEURIPS2018_73a427ba}, rent-or-buy \cite{NEURIPS2018_73a427ba}, graph problems \cite{azar2022online}, covering problems \cite{bamas2020primal}, facility location problems with no private information \cite{fotakis2021learning,jiang2021online}, online learning \cite{bhaskara2020online}, and fair allocations \cite{BanerjeeGGJ22,BanerjeeGHJM023}. They have also been effectively used to speed up offline algorithms, e.g. min-cost bipartite matching \cite{DinitzILMV21} and shortest paths \cite{lattanzi2023speeding,feijen2021dijkstras}.

\subsection{Roadmap}

In Section~\ref{sec:prelim}, we define the problem and our $\harm$ mechanism. In Section~\ref{sec:strategies}, we discuss the undominated strategies and equilibria concepts. In Section~\ref{sec:PoA} we give analysis of our mechanism, providing consistency and robustness bounds. In Section~\ref{sec:conclusion} we conclude our paper. Due to the space constraints, most proofs are deferred to the full version of this paper.

\section{Preliminaries}
\label{sec:prelim}
In a (single) facility location problem the goal is to place a facility $f$ in a given metric space $\metric$ with a distance function $d:\metric\times\metric\to\R$ to serve a set of $n$ agents. Each agent $i\in[n]$ has a most preferred location $\elli\in\metric$ and incurs a cost of $d(f,\elli)$, when the facility is placed at $f\in\metric$.  The objective is to minimize the social cost $\SC(f,\ells)\eqdef\sum_{i\in[n]}d(f,\ell_i)$. The optimal cost is achieved by facility $o$, i.e., $\opt(\ells)=\min_{o' \in \metric} \sum_{i\in[n]} d(o', \elli)=\sum_{i\in[n]} d(o, \elli)$. The facility location $f$ is an $\alpha$-approximation if $\SC(f,\ells)\le\alpha\cdot\opt(\ells)$. 

In the strategic setting, the location $\elli$ is private information of each agent $i\in[n]$, i.e., $i$ may report a different location $\rli$ to the mechanism. The mechanism $f: \metric^n \to \metric$ takes the reported location vector $\rls = (\rli[1], \rli[2], \ldots, \rli[n])$ and outputs the facility $f(\rls)$. For notational convenience, we may let $f$ denote the mechanism's outcome. If the mechanism is randomized, then the outcome is a distribution $\pi(\metric)$ of the facilities in $\metric$. 
We assume that agents are selfish and rational, i.e., each agent $i$ tries to minimize their own expected cost, $\costi\eqdef\Ex{d(f, \elli)}$ for the chosen 
facility $f(\rlsmi,\rli)$, when $i$ reports $\rli$ and the remaining $n-1$ agents  report $\rlsmi$. 
The mechanism is called {\em dominant strategy incentive compatible} (DSIC) if $\Ex{d(f(\rli,  \ellsmi), \elli)} \ge \Ex{d(f(\ells), \elli)}$ for all $i\in[n]$ and all $\ells\in\metric^n$. E.g., a random dictatorship mechanism that selects an agent $i$ uniformly at random and lets $i$ to place the facility at their preferred location $\elli$, i.e., $f: \metric^n\to\pi(\metric)$ is given by $f\sim\textsc{uni}\{\elli[1],\ldots,\elli[n]\}$, is a DSIC mechanism \cite{alon2009strategyproof}. On the other hand, a non-truthful mechanism induces a game among $n$ agents. Similar to DSIC mechanisms, we assume that agents only employ undominated strategies\footnote{No agent $i$ plays a strategy $\rli$ with a higher expected cost than another strategy $\rli'$ on every profile $\ellsmi\in\metric^{n-1}$} (see  
Claim~\ref{claim:about_x_first_claim}).
We further assume that agents 
reach Nash Equilibrium $\rls\in\texttt{NE}(\ells)$, i.e., at a given reported profile $\rls=(\rli[1],\ldots,\rli[n])$ no agent $i$ attains smaller expected cost by a unilateral deviation $(\rli',\rlsmi)$. The Price of Anarchy (PoA) of the mechanism $f:\metric^n\to\pi(\metric)$ is the worst-case ratio of the expected social cost attained at equilibrium and the optimal social cost.
\be
\label{eq:PoA_def}
\texttt{PoA}(f)\eqdef\max\limits_{\ells\in\metric^n}\max\limits_{\rls\in\texttt{NE}(\ells)}\frac{\Ex{\SC(f(\rls),\ells)}}{\opt(\ells)}
\ee

\paragraph{ML Augmented Mechanism Design for Facility Location Problem.} The mechanism designer in addition to the reports $\rls$ is given a prediction $\po$ of the optimum location $o$. Formally, the  mechanism is $f(\rls,\po):\metric^{n+1}\to\pi(\metric)$ (to emphasize the value of $\pam$ used, we may put it as a subscript; we have $f_{\pam}(\cdot,\cdot)$ in our case, where $\pam=\frac{c}{n}\cdot\SC(\po,\ells)$). Consistency and robustness are two standard measures that describe the performance of an algorithm with predictions \cite{LykourisV21}. 

The {\em consistency} captures how well the algorithm performs, 
when prediction is correct ($\po=o(\ells)$): an algorithm (or a truthful mechanism) is $\alpha$-consistent if it achieves an $\alpha$-approximation to the optimum.
In case of non truthful mechanisms, we say that a mechanism $f(\rls,\po)$ is $\alpha$-consistent if its price of anarchy \eqref{eq:PoA_def} is at most $\alpha$ on any instance $\ells\in\metric^n$ when $\po=o(\ells)$.
It is often unreasonable to expect that the predicted solution $\po$ is $100\%$ accurate. Thus it is important to understand how approximation guarantee degrades with the quality of the prediction. To this end, we measure the quality of the prediction $\po$ as the ratio of $\SC(\po,\ells)$ and $\opt(\ells)$.\footnote{One can also use $d(\po,o)$ as a measure of the accuracy of the prediction. A drawback of this approach is that $o$ might not be unique, or that a pair of far away points may yield similar social cost.} Namely, we say that prediction is {\em $\gamma$-accurate} if $\frac{\SC(\po,\ells)}{\opt(\ells)}\le\gamma$.

In contrast, {\em robustness} captures how much of the worst-case guarantees the mechanism retains, when the prediction is arbitrarily wrong. Namely, mechanism $f(\rls,\po)$ is $\beta$-robust, if the price of anarchy of the game induced by $f$ is at most $\beta$, i.e., 
\[
\beta\ge\max\limits_{\ells,\po}\max\limits_{\rls\in\texttt{NE}(\ells,f_{\pam})}\frac{\Ex{\SC(f_{\pam}(\rls,\po),\ells)}}{\opt(\ells)}.
\]

The ML augmented framework studies which $\alpha$-consistency and $\beta$-robustness guarantees are attainable. Typically, there is a trade off between feasible $\alpha$ and $\beta$. We provide guarantees on the robustness and consistency as functions of the prediction accuracy $\gamma$ and the ratio $c=\frac{\pam}{\pac}$ between given parameter $\pam$ and average cost $\pac\eqdef\frac{1}{n}\SC(\po,\ells)$ for the predicted location $\po$.

\vspace{-2mm}
\paragraph{Harmonic Mechanism.} We analyze the following \harm \ mechanism, which installs facility $f$ at $\rli$ with probability inversely proportional to $d(\rli, \po) + \pam$ for a constant $\pam$.
\begin{algorithm}
\caption{Harmonic mechanism $\harm(\pam)$}\label{alg:harmonic}
\KwData{Reported locations $\rls=(\rli[1],\ldots,\rli[n])$, prediction $\po\in\metric$}
\KwResult{Facility $f\sim\pi\{\rli[1],\ldots,\rli[n]\}$}
\lFor{$i\in[n]$}{
let $d_i= d(\rli, \po)$
}
Choose $\Prx{f\gets \rli}\eqdef\frac{1 / (d_i + \pam)}{\sum_{j=1}^n 1/(d_j + \pam)}$ for $i\in[n]$ 
\tcp*{$\Prx{f\gets \rli}\propto\frac{1}{d_i + \pam}$ (proportional to)}
\end{algorithm}

This mechanism is not dominant strategy incentive compatible (DSIC) in general. However, as we will show shortly, the set of undominated strategies for each agent is rather limited. In other words, \harm \ mechanism is not too far from strategy-proof mechanisms. We further analyse how far the social cost at a Nash equilibrium is from the optimum social cost, namely the price of anarchy under \harm \ selection rule $f$.

\section{Strategies and Equilibria in Harmonic Mechanism}
\label{sec:strategies}
We first describe the set of strictly dominant strategies of each
agent $i\in[n]$ under \harm\ mechanism. For notational brevity, 
we use $\elli[j]$ for $j \neq i$ in place of $\rli[j]$, as we 
consider a fixed agent $i$'s strategy for all possible reports of the other agents and one can simply pretend that $\locsmi$ is a variable vector to agent $i$. 

\subsection{Undominated strategies}
\label{sec:undominated}
We consider the expected $\cost_i(\rli)$ of a fixed agent $i$ as a function of her report $\rli$, while assuming that the reports $\locsmi$ of other agents follow a distribution\footnote{The main focus of our paper is on pure Nash equilibria, but here we allow randomization in other agents' strategies.} $\distsmi$. Then 
\be
\label{eq:cost-of-i}
\cost_i(\rli) = \Exlong[\locsmi\sim\distsmi]{
\frac{\sum_{j\ne i}\frac{d(\ell_i,\ell_j)}{d(\ell_j,\po)+\pam}+\frac{d(\ell_i,\rli)}{d(\rli,\po)+\pam}}{\sum_{j\ne i}\frac{1}{d(\ell_j,\po)+\pam}+\frac{1}{d(\rli,\po)+\pam}}
}=
\Exlong[\locsmi]{
\frac{C_1(\locsmi)
+\frac{x}{y+\pam}}{C_2(\locsmi) +\frac{1}{y+\pam}}
},
\ee
where agent $i$ can control only two parameters $x\eqdef d(\ell_i,\rli)\ge 0$ and $y\eqdef d(\rli,\po)\ge 0$ by possibly misreporting her true location, while the remaining terms $C_1\ge 0$ and $C_2> 0$ in the numerator and denominator respectively only depend on the reports of other agents $\locsmi$. 
Thus, we can study agent $i$'s cost as a function of only $x,y\in\R_+$, and each fixed realization of $\locsmi\sim\distsmi$. For convenience we continue to use the same 
notations $C_1$, $C_2,$ and $\cost_i$ for each fixed realization of $\locsmi$:~~
$
\cost_i(x,y,\locsmi)
=
\frac{C_1+x/(y+\pam)}{C_2+1/(y+\pam)}.
$
A few simple observations are in order.
\begin{observation}
\label{obs:monotone_x}
 The cost $\cost_i(x,y,\locsmi)$ is a strictly increasing in $x$ for any $C_1,C_2,y,\pam\ge 0$.    
\end{observation}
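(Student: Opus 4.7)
The claim is essentially immediate from the closed-form expression
\[
\cost_i(x,y,\locsmi) \;=\; \frac{C_1+x/(y+\pam)}{C_2+1/(y+\pam)},
\]
so my plan is to exhibit it as an affine function of $x$ with strictly positive slope.

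The key steps are as follows. First, I fix $y,\pam\ge 0$ (with $y+\pam>0$ so that the expression is defined\,---\,this is guaranteed in the setting of \harm, since either $\pam>0$ or else $y>0$ after $i$ reports any location distinct from $\po$; I would note this briefly in passing). Then I would rewrite
\[
\cost_i(x,y,\locsmi) \;=\; \underbrace{\frac{C_1}{C_2+1/(y+\pam)}}_{\text{independent of } x} \;+\; \underbrace{\frac{1/(y+\pam)}{C_2+1/(y+\pam)}}_{\text{slope in } x}\cdot x,
\]
which is linear in $x$. The slope $\frac{1/(y+\pam)}{C_2+1/(y+\pam)}$ is strictly positive because the numerator $1/(y+\pam)$ is strictly positive and the denominator $C_2+1/(y+\pam)$ is also strictly positive (using $C_2\ge 0$). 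Therefore $\cost_i(\cdot,y,\locsmi)$ is strictly increasing in $x$, which is the desired conclusion.

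There is essentially no obstacle here: the observation is a purely algebraic restatement of the formula in \eqref{eq:cost-of-i}, and the only mild subtlety is making sure the denominator $y+\pam$ is positive so that the expression is well-defined. The reason the authors single out this observation, I expect, is that it is the first of a short chain of monotonicity facts about $\cost_i(x,y,\locsmi)$ used later to characterize undominated strategies\,---\,in particular, it says that, holding the distance $y=d(\rli,\po)$ to the predicted location fixed, agent $i$ strictly prefers to report closer to her own true location $\elli$, which forces any undominated report to lie on a shortest path from $\elli$ to $\po$.
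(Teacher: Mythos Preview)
Your argument is correct. The paper does not give a proof of this observation at all---it is stated as self-evident from the formula $\cost_i(x,y,\locsmi)=\frac{C_1+x/(y+\pam)}{C_2+1/(y+\pam)}$---and your write-up simply makes explicit the one-line algebraic reason (the denominator is independent of $x$ and the numerator is affine in $x$ with strictly positive coefficient $1/(y+\pam)$), which is exactly the intended reading.
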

By triangle inequality $x+y=d(\elli,\rli)+d(\rli,\po)\ge d(\elli,\po)$ and $x+d(\elli,\po)\ge y$. Hence,
\begin{observation}
\label{obs:x_abs}
 $x\ge \abs{d(\elli,\po)-y}$.
\end{observation}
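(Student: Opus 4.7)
The plan is to prove the bound $x \ge \lvert d(\ell_i,\hat o) - y\rvert$ by two applications of the triangle inequality, producing a lower bound on $x$ in each direction of the absolute value.

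First I would recall the meaning of the two controlled quantities: $x = d(\ell_i, \tilde\ell_i)$ and $y = d(\tilde\ell_i, \hat o)$. The three points $\ell_i, \tilde\ell_i, \hat o$ live in the metric space $\mathcal M$, so the standard triangle inequality applies to any permutation of these as endpoints.

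Next I would apply the triangle inequality with $\tilde\ell_i$ as the middle vertex on the path from $\ell_i$ to $\hat o$, namely $d(\ell_i,\hat o) \le d(\ell_i,\tilde\ell_i) + d(\tilde\ell_i,\hat o) = x + y$, which rearranges to $x \ge d(\ell_i,\hat o) - y$. This is exactly the first inequality the authors state just before the observation. Then I would apply the triangle inequality with $\ell_i$ as the middle vertex on the path from $\tilde\ell_i$ to $\hat o$, namely $d(\tilde\ell_i,\hat o) \le d(\tilde\ell_i,\ell_i) + d(\ell_i,\hat o) = x + d(\ell_i,\hat o)$, giving $x \ge y - d(\ell_i,\hat o)$. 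This is the second displayed inequality.

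Finally I would combine the two bounds: since $x$ exceeds both $d(\ell_i,\hat o) - y$ and $y - d(\ell_i,\hat o)$, it exceeds their maximum, which is precisely $\lvert d(\ell_i,\hat o) - y\rvert$. There is no real obstacle here — the work is done by the two triangle inequalities, and the observation is essentially a restatement of the reverse triangle inequality in the notation of the variables $x$ and $y$ that parametrize agent $i$'s deviation.
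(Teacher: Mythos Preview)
Your proof is correct and matches the paper's approach exactly: the paper derives the observation from the two triangle inequalities $x+y\ge d(\ell_i,\hat o)$ and $x+d(\ell_i,\hat o)\ge y$ stated immediately before the statement, which is precisely what you do.
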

We further note that player $i$ would never choose $y$ larger than $d(\elli,\po)$.

\begin{restatable}{claim}{claimaboutx}
\label{claim:about_x_first_claim}
The strategy of agent $i$ with $y>d(\elli,\po)$ is strictly dominated by the strategy $x=0$, $y=d(\elli,\po)$ for any $C_1,C_2\ge 0$. 
\end{restatable}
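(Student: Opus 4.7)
The plan is to prove strict dominance pointwise for each realization of $\locsmi$ drawn from $\distsmi$, which then implies strict dominance of the expected cost by monotonicity of expectation. Fix such a realization and the associated constants $C_1 \geq 0$, $C_2 > 0$ from \eqref{eq:cost-of-i}. I want to show that $\costi(x, y, \locsmi) > \costi(0, d(\elli, \po), \locsmi)$ whenever $y > d(\elli, \po)$. The argument chains two monotonicity steps through the auxiliary value $(0, y)$ of the abstract function
\[
\costi(x, y, \locsmi) = \frac{C_1 + x/(y+\pam)}{C_2 + 1/(y+\pam)};
\]
this auxiliary point need not correspond to a feasible report, but it is a legal input to the formula, which is all the comparison requires.

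For the first step, Observation~\ref{obs:x_abs} gives $x \geq y - d(\elli, \po) > 0$ under the hypothesis $y > d(\elli, \po)$, so Observation~\ref{obs:monotone_x} yields the strict inequality $\costi(x, y, \locsmi) > \costi(0, y, \locsmi)$. For the second step, with $x = 0$ the formula collapses to $\costi(0, y, \locsmi) = C_1/(C_2 + 1/(y+\pam))$; decreasing $y$ to $d(\elli, \po)$ enlarges $1/(y+\pam)$ and thus the denominator, so the (non-negative) ratio weakly decreases, i.e.\ $\costi(0, y, \locsmi) \geq \costi(0, d(\elli, \po), \locsmi)$. Chaining the two inequalities gives strict dominance pointwise in $\locsmi$, and taking expectation over $\locsmi \sim \distsmi$ preserves it.

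The main subtlety, and the only real design choice in the proof, is the intermediate point. Decreasing $y$ with $x$ held fixed is \emph{not} monotone in general: a short calculation shows $\partial \costi/\partial y$ has the same sign as $C_1 - x C_2$, so the naive one-step comparison of $\costi(x,y,\locsmi)$ with $\costi(x, d(\elli,\po), \locsmi)$ can go either way depending on $C_1, C_2$. Routing through $(0, y)$ sidesteps this by exploiting first the unconditional strict monotonicity in $x$, and then the unambiguous monotonicity in $y$ that kicks in once $x=0$.
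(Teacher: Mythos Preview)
Your proof is correct and follows essentially the same two-step monotonicity argument as the paper. The only cosmetic difference is the choice of intermediate point: you route through $(0,y)$ (strict decrease in $x$, then weak decrease in $y$ at $x=0$), whereas the paper routes through $(y-d(\elli,\po),\,y)$ (weak decrease in $x$ down to the triangle-inequality bound, then a direct strict comparison of numerator and denominator).
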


\begin{proof}
    
Let us denote $d(\elli,\po)$ by $d_i$. By observation~\ref{obs:x_abs} we have $x\ge y-d_i\eqdef x_0$, and by observation~\ref{obs:monotone_x} the $\cost_i(x,y,\locsmi)$ is at least $\cost_i(x_0,y,\locsmi)$, where $y=x_0+d_i$. We need to show that 
\[
\cost_i(x_0,y)=\frac{C_1+\frac{x_0}{x_0+d_i+\pam}}{C_2+\frac{1}{x_0+d_i+\pam}}>\cost_i(0,d_i)=\frac{C_1}{C_2+\frac{1}{d_i+\pam}}.
\]
This inequality holds, since the numerator $C_1+\frac{x_0}{x_0+d_i+\pam}$ is larger than the numerator $C_1$, and the denominator $C_2+\frac{1}{x_0+d_i+\pam}$ is smaller than the denominator
$C_2+\frac{1}{d_i+\pam}$.
\end{proof}

Therefore, $d(\rli,\po)=y\in[0,d(\elli,\po)]$. By observation~\ref{obs:monotone_x}, agent $i$ chooses minimal possible $x$ for any given $y$ and any $\locsmi$. Now, $x\ge d(\elli,\po)-y$ by observation~\ref{obs:x_abs}. If $\metric$ is a {\em continuous space}\footnote{Formally, for any two points $P_1,P_2\in\metric$ and $x\in[0,d(P_1,P_2)]$ there exists a point $P\in\metric$ such that $d(P,P_1)=x$ and $d(P,P_2)=d(P_1,P_2)-x$}, then there always exists a $\rli\in\metric$ with $x\eqdef d(\rli,\elli)= d(\elli,\po)-y$. I.e., player $i$'s undominated strategies in the continuous metric space are $\rli$ on a shortest path $\shortpath(\po,\elli)$ between $\po$ and $\elli$.
\begin{claim}
\label{cl:shortest_path}
If $\metric$ is a continuous metric space, then reports on the shortest paths $\rli\in\shortpath(\po,\ell_i)$ strictly dominate other strategies of agent $i$.
\end{claim}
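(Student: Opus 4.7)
The plan is to combine the two observations and the preceding claim to show that any report off the shortest path can be strictly improved by a report on the shortest path with the same distance $y$ to the prediction $\po$. Concretely, I will fix an arbitrary strategy $\rli\in\metric$ of agent $i$ and set $x=d(\elli,\rli)$, $y=d(\rli,\po)$. I will argue that there is a point $\rli^\ast\in\shortpath(\po,\elli)$ whose associated pair $(x^\ast,y^\ast)$ satisfies $y^\ast=\min(y,d(\elli,\po))$ and $x^\ast=d(\elli,\po)-y^\ast$, and that $\cost_i(x^\ast,y^\ast,\locsmi)\le\cost_i(x,y,\locsmi)$ for every realization of $\locsmi$, with strict inequality whenever $\rli\notin\shortpath(\po,\elli)$.

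First I would split into two cases according to whether $y\le d(\elli,\po)$ or $y>d(\elli,\po)$. In the latter case, Claim~\ref{claim:about_x_first_claim} already shows that $\rli$ is strictly dominated by the strategy $(x,y)=(0,d(\elli,\po))$, which corresponds to reporting $\rli^\ast=\elli\in\shortpath(\po,\elli)$; so the conclusion follows immediately. For the case $y\le d(\elli,\po)$, Observation~\ref{obs:x_abs} gives $x\ge d(\elli,\po)-y$. Using continuity of the metric space, there exists a point $\rli^\ast$ on a shortest path from $\po$ to $\elli$ with $d(\rli^\ast,\po)=y$ and $d(\rli^\ast,\elli)=d(\elli,\po)-y$; this is exactly where continuity of $\metric$ is needed (the defining property stated in the footnote guarantees such an intermediate point). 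Thus for this $\rli^\ast$ we have $y^\ast=y$ and $x^\ast=d(\elli,\po)-y\le x$. By Observation~\ref{obs:monotone_x}, the cost is strictly increasing in $x$ with $y$ fixed, so $\cost_i(x^\ast,y,\locsmi)\le\cost_i(x,y,\locsmi)$ for every $C_1\ge 0$, $C_2>0$, hence for every realization of $\locsmi$.

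For strict dominance, I would observe that if $\rli\notin\shortpath(\po,\elli)$, then the triangle inequality $d(\elli,\rli)+d(\rli,\po)\ge d(\elli,\po)$ must be strict (otherwise $\rli$ lies on some shortest path from $\po$ to $\elli$); equivalently $x>d(\elli,\po)-y=x^\ast$. Strict monotonicity in $x$ (Observation~\ref{obs:monotone_x}) then upgrades the weak inequality above to a strict one, uniformly in $\locsmi$, and taking expectation over $\locsmi\sim\distsmi$ in equation~\eqref{eq:cost-of-i} preserves the strict inequality. This yields strict domination of $\rli$ by $\rli^\ast\in\shortpath(\po,\elli)$.

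The main technical subtlety, rather than an obstacle, is the uniformity in $\locsmi$: strict dominance requires the improvement to hold for every report profile of the other agents, not just in expectation under a particular distribution. This is handled for free because Observation~\ref{obs:monotone_x} gives strict monotonicity in $x$ for all nonnegative $C_1,C_2,y,\pam$, so the pointwise comparison survives any expectation. The only genuine use of the ``continuous metric'' hypothesis is to produce the witness $\rli^\ast$ with the prescribed distances $x^\ast$ and $y^\ast$; without continuity one could only dominate by a sequence of points approaching such an $\rli^\ast$, and the claim as stated would fail.
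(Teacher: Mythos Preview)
Your proposal is correct and follows essentially the same route as the paper: the paper's argument (in the paragraph preceding the claim) also reduces to Claim~\ref{claim:about_x_first_claim} for $y>d(\elli,\po)$, then for $y\in[0,d(\elli,\po)]$ uses Observation~\ref{obs:x_abs} to get $x\ge d(\elli,\po)-y$, invokes continuity to produce a point on a shortest path attaining $x=d(\elli,\po)-y$, and applies the strict monotonicity in $x$ from Observation~\ref{obs:monotone_x}. You are slightly more explicit than the paper about why the inequality is strict when $\rli\notin\shortpath(\po,\elli)$ and about uniformity over $\locsmi$, but the underlying argument is the same.
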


\subsection{Equilibria Concepts}
\label{sec:pure_vs_mixed}
Many solution concepts may be used to describe the outcome of a game, such as Pure Nash (PNE), Mixed Nash (MNE), Correlated (CE), and Coarse Correlated (CCE) equilibria. In case of the game induced by the \harm \ mechanism $f$, the solution concepts that involve randomization are not ideal. Indeed, an agent $i$ might find it difficult to compute their expected cost \eqref{eq:cost-of-i} for a fixed reported location $\rli$ let alone finding the best response over all possible $\rli\in\shortpath(\po,\elli)$. E.g., if $n-1=10$ agents randomize in $\locsmi$ between two pure strategies in a mixed Nash Equilibrium, then \eqref{eq:cost-of-i} would already have more than $1000$ fraction terms. 

Thus we adopt the Pure Nash Equilibrium (PNE) solution concept to describe the outcome of \harm \ mechanism $f$. It is important to keep in mind that, unlike MNE, PNE may not exist. We show below that PNE always exists in a continuous metric space $\metric$, i.e., in a metric where every pair of points has a continuous shortest path between them. The proof refers to the full version.
\begin{restatable}{theorem}{PNEexistence}
    \label{thm:Pure_Nash_existence}
    Let $\metric$ be continuous (any pair of points has a shortest path), then for any initial positions $\locs=(\ell_i)_{i\in[n]}$ of $[n]$ agents there is a Pure Nash Equilibrium under \harm\ mechanism. 
\end{restatable}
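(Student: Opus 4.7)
The plan is to apply Kakutani's fixed-point theorem to the best-response correspondence after first shrinking the strategy space using Claim~\ref{cl:shortest_path}. Specifically, for each agent $i$, I would fix a continuous shortest-path parameterization $\varphi_i: [0, D_i] \to \metric$ with $d(\varphi_i(y), \po) = y$ and $d(\varphi_i(y), \elli) = D_i - y$, where $D_i \eqdef d(\po, \elli)$; existence of such a $\varphi_i$ is exactly the continuity assumption on $\metric$. Since every undominated report of agent $i$ lies on some shortest path from $\elli$ to $\po$, this identifies $i$'s effective strategy space with the compact interval $[0, D_i] \subset \R$, and the joint strategy space $X \eqdef \prod_{i\in[n]}[0, D_i]$ becomes a nonempty, compact, convex subset of $\R^n$.

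The next step is to exhibit enough structure of $\cost_i$ to meet Kakutani's hypotheses. Substituting $x = D_i - y_i$ into \eqref{eq:cost-of-i} and clearing the common factor $(y_i + \pam)$ from numerator and denominator gives the pointwise (pure-strategy) identity
\[
\cost_i(y_i, y_{-i}) \;=\; \frac{(C_1 - 1)(y_i + \pam) + D_i + \pam}{C_2(y_i + \pam) + 1},
\]
where $C_1 = \sum_{j\ne i} \frac{d(\elli, \varphi_j(y_j))}{y_j + \pam}$ and $C_2 = \sum_{j\ne i} \frac{1}{y_j + \pam}$ depend only on $y_{-i}$. Because each $\varphi_j$ is continuous and $d$ is continuous, both $C_1$ and $C_2$ are continuous in $y_{-i}$, and hence $\cost_i$ is jointly continuous on $X$. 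For fixed $y_{-i}$, the map $y_i \mapsto \cost_i(y_i, y_{-i})$ is a linear-fractional (Möbius) function of $y_i + \pam$; a direct derivative computation shows its sign equals $\text{sgn}\bigl((C_1 - 1) - C_2(D_i + \pam)\bigr)$, which is constant in $y_i$. Consequently the best response $BR_i(y_{-i}) \subseteq [0, D_i]$ is either $\{0\}$, $\{D_i\}$, or the whole interval $[0, D_i]$ — in particular always nonempty, closed, and convex.

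Joint continuity of $\cost_i$ together with compactness of $[0, D_i]$ lets me invoke Berge's Maximum Theorem to conclude that each $BR_i$ is upper hemicontinuous with closed graph, and hence so is the product correspondence $BR: X \to 2^X$ defined by $BR(y) = \prod_i BR_i(y_{-i})$. All the hypotheses of Kakutani's fixed-point theorem are therefore satisfied, yielding some $y^* \in BR(y^*)$; then $\rli^* \eqdef \varphi_i(y_i^*)$ is the desired PNE. The main obstacle lies in the seemingly minor step of verifying continuity of $\cost_i$ in $y_{-i}$: shortest paths need not be unique, so one must argue that \emph{any} continuous selection $\varphi_j$ works for the fixed-point argument (the PNE thereby produced depends on that selection, but its existence does not), and one must keep $\pam$ strictly positive (or pass to an appropriate limit) to preclude singularities $1/(y_j + \pam)$ when some agent places her report at $\po$. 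Once those technicalities are pinned down, the Möbius structure makes the remaining convexity/monotonicity checks immediate.
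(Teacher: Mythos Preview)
Your proposal is correct and follows essentially the same approach as the paper: restrict each agent's strategy space to a fixed shortest path parameterized by distance to $\po$, observe that the best-response correspondence takes values in $\{\{0\},\{D_i\},[0,D_i]\}$ (the paper obtains this from Claim~\ref{cl:pure_equilibrium}, you from the M\"obius structure of $\cost_i$, which is the same computation), and apply Kakutani on the product of intervals. Your explicit appeal to Berge's theorem and your flagging of the $\pam=0$ singularity and the non-uniqueness of geodesics are refinements the paper glosses over, but the overall argument is the same.
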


\begin{proof}

Let us fix a single shortest path in $\shortpath(\po,\ell_i)$ parameterized as $L_i=[0,1]$ (by the ratio $d(\rli,\po)/d(\ell_i,\po)\in[0,1]$ for $\rli\in\shortpath(\po,\ell_i)$) per each agent $i\in[n]$. We shall apply Kakutani fixed-point theorem for the simultaneous best responses of all $n$ players, in order to prove the existence of PNE. 

Specifically, for the convex and compact subset $X\eqdef\prod_{i=1}^n L_i=[0,1]^n$ of Euclidean space, we consider the best response $\br: X\to 2^{X}$ function which is defined as follows. Any $\vect{x}\in X$ corresponds to a location profile $\locs\in\metric^n$ on the specific shortest paths from each $\ell_i$ to the predicted location $\po$; for every player $i$ we consider all her potential best responses $\bri(\locsmi)$ with respect to the locations $\locsmi$ of other players and restricted to her fixed shortest path $L_i$; we define $\br(\vect{x})\eqdef \left(\bri(\locsmi)\right)_{i=1}^{n}$.  

By Claim~\ref{cl:pure_equilibrium} (we shall formally state and prove it shortly in the next subsection) each $\bri(\locsmi)$ is either a single point $\ell_i$ ($x_i=1$,) or $\po$ $(x_i=0)$, or it is the whole line segment $L_i$. Hence, $\br(\vect{x})$ is a convex set for every $\vect{x}\in X$. It is also not hard to verify that $\br$ has a closed graph in the product topology on $X\times X$, i.e., the set $\{(\vect{x},\vect{y}) | \vect{x}\in X, \vect{y}\in X, \vect{y}\in\br(\vect{x})\}$ is closed. The Kakutani fixed-point theorem states that any such function must have a fixed point, i.e., $\exists ~\vect{x}^*\in X$ such that $\vect{x}^*\in \br(\vect{x}^*)$. This $\vect{x}^*$ corresponds to the Pure Nash Equilibrium of \harm\ mechanism.
\end{proof}

For discrete (non continuous) metric spaces $\metric$ to allow randomness, we can let agents to explicitly report their locations as a distribution over a finite number of points in $\metric$. I.e., we consider an extension of the metric space $\metric$ to a larger metric space $\overline{\metric}$ over all finite convex combinations $\overline{\metric}=\{ \boldsymbol{\alpha}=\sum_{i\in[k]}\alpha_i\cdot\ell_i ~|~ \sum_{i\in[k]}\alpha_i=1,~\alpha_i\ge 0,~\ell_i\in\metric\}$, where a naturally induced metric in $\overline{\metric}$ is given by the earth mover's distance between two probability distributions $\boldsymbol{\alpha},\boldsymbol{\beta}$ over points in $\metric$. Then $\overline{\metric}$ is a continuous metric space and thus PNE does exist in $\overline{\metric}$.
Furthermore, when many agents participate in \harm\ mechanism, a reasonable approximation and/or simplification of their expected costs (due to good concentration) can be done by calculating expectations of the numerator and denominator in \eqref{eq:cost-of-i} separately and then using the ratio of expectations instead of 
the expected ratio. This approach corresponds exactly to agents 
playing pure strategies in $\overline{\metric}$.    

\paragraph{Best Response in Pure Nash Equilibria}
We next show in Claim~\ref{cl:pure_equilibrium} which locations on the shortest paths could be the best response of agent $i$ for a fixed profile of other agents' reports $\locsmi$: $i$ only needs to decide between two choices  $\rli\in\{\ell_i,\po\}$ and only when she is indifferent, then $i$ may also play any $\rli\in\shortpath(\po,\ell_i)$. 

\begin{restatable}{claim}{pureequlibriumequations}
\label{cl:pure_equilibrium}
Given a profile $\locsmi$ of agents $[n]\setminus \{i\}$ locations, agent $i$'s best response is to only report
    \begin{itemize}[nosep]
        \item her true location $\rli=\elli$ when         
            $\cost_i(\elli,\locsmi)<d(\elli,\po)+\pam$; 
        \item predicted location $\rli=\po$ when 
            $\cost_i(\elli,\locsmi)>  d(\elli,\po)+\pam$; then $\cost_i(\rli,\locsmi) \ge d(\elli,\po)+\pam$; 
        \item any point $\rli\in\shortpath(\po,\elli)$ when $\cost_i(\elli,\locsmi) = d(\elli,\po)+\pam$; then $\cost_i(\elli,\locsmi)= \cost_i(\rli,\locsmi)$. %
    \end{itemize}
\end{restatable}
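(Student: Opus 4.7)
The plan is to first invoke Claim~\ref{cl:shortest_path} to restrict attention to reports $\rli$ on the continuous shortest path $\shortpath(\po,\elli)$, and parametrize each such $\rli$ by $y \eqdef d(\rli,\po) \in [0, d_i]$, writing $d_i \eqdef d(\elli,\po)$ for brevity. Continuity of the path lets one simultaneously take $x \eqdef d(\elli,\rli) = d_i - y$, so the best-response problem reduces to minimizing a one-variable function on $[0, d_i]$, where the constants $C_1 \ge 0$ and $C_2 > 0$ are fixed by the profile $\locsmi$.

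Next, I would substitute $x = d_i - y$ together with the change of variable $z \eqdef y + \pam$ into~\eqref{eq:cost-of-i} and rearrange to the linear-fractional (Möbius) form
\[
\cost_i(z) \;=\; \frac{(C_1 - 1)\, z + (d_i + \pam)}{C_2\, z + 1}, \qquad z \in [\pam,\, d_i + \pam].
\]
A one-line derivative computation shows that $\cost_i'(z)$ has constant sign on this interval, equal to that of the single discriminant
\[
K \;\eqdef\; C_1 - 1 - C_2(d_i + \pam).
\]
Consequently $\cost_i$ is strictly monotone on $[\pam, d_i+\pam]$ when $K \ne 0$ (decreasing if $K < 0$, increasing if $K > 0$) and constant when $K = 0$, so any minimizer over $\shortpath(\po,\elli)$ must lie at an endpoint, with every interior point becoming a minimizer exactly when $K = 0$.

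Finally, I would pin down which endpoint by direct sign matching. A short algebraic manipulation shows that the three quantities
\[
\cost_i(\elli,\locsmi) - (d_i+\pam),\quad \cost_i(\po,\locsmi) - (d_i+\pam),\quad \cost_i(\elli,\locsmi) - \cost_i(\po,\locsmi)
\]
are each a non-negative scalar multiple of $K$, with the first and third being strictly positive multiples whenever $d_i > 0$. The first identity converts the three hypotheses of the claim into the three sign regimes $K < 0$, $K > 0$, $K = 0$; the second delivers the auxiliary bound $\cost_i(\po,\locsmi) \ge d_i + \pam$ needed in the $K > 0$ case; and the third, combined with monotonicity, confirms which endpoint is the true minimizer. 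In the equality case $K = 0$ the Möbius form is constant in $z$, so every $\rli \in \shortpath(\po,\elli)$ achieves the same cost $d_i + \pam = \cost_i(\elli,\locsmi)$. The only mildly delicate step is recognizing that the \emph{same} quantity $K$ governs both the direction of monotonicity and all three endpoint comparisons; once this is noticed, the remainder is routine algebra.
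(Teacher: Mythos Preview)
Your proposal is correct and is essentially the same argument as the paper's, just repackaged. The paper computes the difference $\cost_i(\rli,\locsmi)-\cost_i(\elli,\locsmi)$ directly and factors it as a non-negative term times $\bigl(d(\elli,\po)+\pam-\cost_i(\elli,\locsmi)\bigr)$; you instead substitute $z=y+\pam$ to obtain the M\"obius form and read off the derivative sign $K=C_1-1-C_2(d_i+\pam)$. These two discriminants are positive multiples of one another, since your first identity gives $\cost_i(\elli,\locsmi)-(d_i+\pam)=\frac{(d_i+\pam)\,K}{C_2(d_i+\pam)+1}$, so the case split is identical. One small advantage of your framing is that the auxiliary inequality $\cost_i(\po,\locsmi)\ge d_i+\pam$ in the second bullet falls out of the same identity $\cost_i(\po,\locsmi)-(d_i+\pam)=\frac{\pam\,K}{C_2\pam+1}$, whereas the paper verifies it by a separate short computation.
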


\begin{proof}
By equation \eqref{eq:cost-of-i} $i$ chooses $\rli$ with $x,y$ that minimize
\[
\cost_i(x,y,\locsmi)=
\frac{\sum_{j\ne i}d(\elli,\ell_j)/(d(\ell_j,\po)+\pam)+x/(y+\pam)}{\sum_{j\ne i}1/(d(\ell_j,\po)+\pam)+1/(y+\pam)},
\] 
where $x=d(\elli,\rli)\ge 0$ and $d(\elli,\po)\ge y=d(\rli,\po)\ge 0$. Without loss of generality we can assume that $x+y=d(\elli,\po)$. 
We consider $\cost_i(\rli,\locsmi)-\cost_i(\elli,\locsmi)=
\cost_i(x,y,\locsmi)-\cost_i(0,x+y,\locsmi)$. To simplify notations let us denote the pairwise distances $d_{jk}\eqdef d(\ell_j,\ell_k)$ for $j,k\in[n]$, $d_j\eqdef d(\ell_j,\po)$ for $j\in[n]$, and the respective denominators
$\denomr=\sum_{j\ne i}\frac{1}{d_j+\pam}+\frac{1}{y+\pam}$ and $\denom=\sum_{j\ne i}\frac{1}{d_j+\pam}+\frac{1}{x+y+\pam}$. Then
\begin{align}
    \cost_i(\rli,\locsmi) - \cost_i(\ell_i,\locsmi) &=
    \frac{x}{(x+y+\pam)\cdot \denomr}-
    \left(\sum_{j\ne i}\frac{d_{ij}}{d_j+\pam}\right)\cdot
    \frac{1/(y+\pam)-1/(x+y+\pam)}{\denom\cdot\denomr} \nonumber
 \\
    &=\frac{x}{(y+\pam)\cdot \denomr}-
    \frac{1}{\denom}\left(\sum_{j\ne i}\frac{d_{ij}}{d_j+\pam}\right)\cdot
    \frac{x}{(y+\pam)\cdot(x+y+\pam)\cdot\denomr} \nonumber
\\
    &= \frac{x}{(y+\pam)\cdot(x+y+\pam)\cdot\denomr}\cdot
    \Big(x+y+\pam-\cost_i(\ell_i,\locsmi)\Big). \label{eq:cost_difference}
\end{align}
When $x+y+\pam-\cost_i(\ell_i,\locsmi)=d(\ell_i,\po)+\pam-\cost_i(\ell_i,\locsmi)>0$, the expression \eqref{eq:cost_difference} is minimized only for $\rli=\elli$ with $x=0,y=d(\elli,\po)$. When $x+y+\pam-\cost_i(\elli,\locsmi)<0$, expression \eqref{eq:cost_difference} is minimized only for $\rli=\po$ with $y=0,x=d(\ell_i,\po)$. When $x+y+\pam-\cost_i(\elli,\locsmi)=0$ we get $\cost_i(\rli,\locsmi) - \cost_i(\elli,\locsmi) =0$ for any $x,y: x+y=d(\elli,\po)$.
This almost concludes the proof, as we get all required bounds on $\cost_i(\elli,\ellsmi)$. 

The only missing part is to show that $\cost_i(\rli,\ellsmi)\ge d(\elli,\po)+\pam$, when 
$\cost_i(\elli,\ellsmi)> d(\elli,\po)+\pam$ and $\rli=\po$. Note that in this case $y=d(\rli,\po)=0$ and $x=d(\rli,\elli)=d(\elli,\po)$. Then $\cost_i(\elli,\ellsmi)=\cost_i(0,x)=\frac{C_1}{C_2+1/(x+\pam)}> x+\pam$, i.e., $C_1\ge C_2(x+\pam)+1$, while $\cost_i(\po,\ellsmi)=\cost_i(x,0)=\frac{C_1+x/\pam}{C_2+1/\pam}$. 
We get the desired bound on $\cost_i(\po,\ellsmi)$, when using the lower bound on $C_1\ge C_2(x+\pam)+1$: $\cost_i(\po,\ellsmi)=\frac{C_1+x/\pam}{C_2+1/\pam}\ge
\frac{C_2(x+\pam)+1+x/\pam}{C_2+1/\pam}=x+\pam$.
\end{proof}

\section{Price of Anarchy}
\label{sec:PoA}
Here we study the Price of Anarchy (PoA) of the PNE of \harm \ mechanism $f$ in metric $\metric$ or $\overline{\metric}$. We will make extensive use of Claims~\ref{cl:shortest_path},~\ref{cl:pure_equilibrium} to prove PoA guarantees for the consistency and robustness of \harm \ mechanism.
We derive two different PoA bounds for robustness: (a) when $\metric$ is a strictly convex space (b) when $\metric$ is a general metric space.

We first introduce some simplified notations that would help us to use Claim~\ref{cl:pure_equilibrium}. Let $\rls=(\rli)_{i\in[n]}$ be a PNE. 
Let $\tdi\eqdef d(\po,\ell_i)$ denote the distance between predicted location $\po$ and the true location $\elli$ of agent $i\in[n]$. In equilibrium, some of the agents may report different locations $\rli$, we denote by $d_i\eqdef d(\po,\rli)$ the distance from $\po$ to the reported location of $i\in[n]$. 
Note that player $i$ will always have $t_i\ge d_i$ in an equilibrium. To write down the  $\cost_i(\rli)$ of agent $i$ we will use $c_{ij}\eqdef d(\ell_i,\rli[j])$ (note that $c_{ij}\ne c_{ji}$) that denotes the distance between true location of $i\in[n]$ and the reported location of $j\in[n]$. In particular, $c_{ii}=d(\ell_i,\rli)=t_i-d_i$. Then 
\be
\label{eq:cost_short}
\cost_i(\rli)=\frac{\sum_{j\in[n]}c_{ij}/(d_j+\pam)}{\sum_{j\in[n]}1/(d_j+\pam)}
\ee

By Claim~\ref{cl:pure_equilibrium} each agent either reports $\rli=\po$, or $\rli=\ell_i$, or $\rli\in\shortpath(\po,\ell_i)$ in a PNE. Let 
\[S\eqdef\{i\in [n] ~|~\rli=\po\},\quad
T\eqdef\{i\in [n] ~|~\rli=\ell_i\}, \quad U\eqdef [n]\setminus (S\cup T)\] and also use $\overline{S}=[n]\setminus S= U\cup T$. We also use $D\eqdef\sum_{j\in[n]}1/(d_j+\pam)$ to denote the denominator of each $\cost_i(\rls)$ for $i\in[n]$.

\subsection{Consistency}
\label{sec:consistency}

In this section, we show that the $\harm(\pam)$ mechanism can achieve a solution arbitrarily close to the optimum depending on the quality of the prediction. Specifically, when the predicted location $\po$ gives an almost optimum estimation of the true optimum ($\sum_{i \in [n]} d(\ell_i, \po) \approx \sum_{i \in [n]} d(\ell_i, o)$) and $\frac{\pam}{\opt/n}=O(\eps)$, we are able to show that PoA of $\harm(\pam)$ mechanism is close to $1$. 
\begin{restatable}{theorem}{consistencypoa}
\label{thm:consistency_poa}    
    If $\po$ is $\gamma$-accurate, i.e., $\frac{\SC(\po, \ells)}{\opt(\ells)} \le \gamma$, and $c \eqdef \pam / \pac$ for an average cost $\pac=\frac{1}{n}\SC(\po,\ells)$ of the prediction $\po$, then $\harm(\pam)$ mechanism is $ \gamma\cdot (1 + 2c)$-consistent.
\end{restatable}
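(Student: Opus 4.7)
\textbf{Proof plan for Theorem~\ref{thm:consistency_poa}.}
Fix a PNE $\rls$ and partition the agents as in Section~\ref{sec:PoA}, namely $S=\{i:\rli=\po\}$, $T=\{i:\rli=\ell_i\}$, $U=[n]\setminus(S\cup T)$, with $s\eqdef |S|$. Write $t_i=d(\po,\ell_i)$, $d_i=d(\po,\rli)$, $p_i=\frac{1/(d_i+\pam)}{D}$ with $D=\sum_j 1/(d_j+\pam)$, and $\mu\eqdef\sum_j p_j d_j=\E[d(f,\po)]$. Since $\SC(\po,\ells)=\sum_i t_i=n\hat a$ and $c=\pam/\hat a$, we have $n\pam=c\cdot\SC(\po,\ells)\le c\gamma\cdot\opt(\ells)$, so it suffices to prove
\begin{equation*}
\SC(f,\ells)\;\le\;\SC(\po,\ells)+2n\pam.
\end{equation*}

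Step 1 (Cost upper bounds per agent). For every $i$ and $j$, triangle inequality gives $c_{ij}\le t_i+d_j$, because $\rli[j]$ lies on a shortest path to $\po$ with $d(\po,\rli[j])=d_j$ (Claim~\ref{cl:shortest_path}). For $i\in S$ we have $d_i=0$ and $c_{ii}=t_i$, so
\begin{equation*}
\cost_i(\rli)=p_i t_i+\sum_{j\ne i}p_j c_{ij}\le p_i t_i+\sum_{j\ne i}p_j(t_i+d_j)=t_i+\sum_{j\ne i}p_j d_j\le t_i+\mu.
\end{equation*}
For $i\in T$ the best-response characterization of Claim~\ref{cl:pure_equilibrium} gives directly $\cost_i(\rli)\le t_i+\pam$, and for $i\in U$ we have $\cost_i(\rli)=t_i+\pam$.

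Step 2 (Sum over agents). Adding the bounds over the three classes yields
\begin{equation*}
\SC(f,\ells)\;\le\;\sum_{i\in T\cup U}(t_i+\pam)+\sum_{i\in S}(t_i+\mu)\;=\;\SC(\po,\ells)+(n-s)\pam+s\mu.
\end{equation*}

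Step 3 (Controlling $s\mu$; this is the crux). Every $i\in S$ contributes $1/\pam$ to $D$, so $D\ge s/\pam$, whence
\begin{equation*}
\mu=\frac{n}{D}-\pam\;\le\;\frac{n\pam}{s}-\pam=\frac{(n-s)\pam}{s},
\end{equation*}
and therefore $s\mu\le(n-s)\pam$ (the bound is trivial when $s=0$). Plugging this back in Step~2,
\begin{equation*}
\SC(f,\ells)\;\le\;\SC(\po,\ells)+(n-s)\pam+(n-s)\pam\;\le\;\SC(\po,\ells)+2n\pam\;=\;(1+2c)\SC(\po,\ells)\;\le\;\gamma(1+2c)\opt(\ells).
\end{equation*}

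The bulk of the work is bookkeeping; the one genuinely nontrivial step is Step~3, the observation that the very agents who move to $\po$ inflate the denominator $D$ enough to force the expected deviation $\mu=\E[d(f,\po)]$ to be small. Without this, one would only get the loose bound $\mu\le\hat a$, yielding $2\gamma$-consistency rather than $\gamma(1+2c)$.
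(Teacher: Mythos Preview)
Your proof is correct and reaches the same sharp inequality $\SC(f,\ells)\le \SC(\po,\ells)+2|\overline{S}|\pam$ as the paper, but by a genuinely different route. The paper sums the equilibrium inequalities $t_i+\pam\ge\cost_i(\rli)$ only over $i\in\overline{S}$, rewrites the resulting double sum $\sum_{i\in\overline{S}}\sum_{j}\frac{c_{ij}}{d_j+\pam}$ as $\SC$ minus the $i\in S$ contribution, and bounds the latter via $c_{ij}\le t_i+d_j$ and $D\ge |S|/\pam$; the $2|\overline{S}|\pam$ emerges from the algebra. You instead bound each $\cost_i$ individually---$\le t_i+\pam$ for $i\in T\cup U$ from Claim~\ref{cl:pure_equilibrium}, and $\le t_i+\mu$ for $i\in S$ from the triangle inequality---and then close the gap with the clean identity $\mu=\E[d(f,\po)]=n/D-\pam$, from which $D\ge s/\pam$ immediately gives $s\mu\le(n-s)\pam$. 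Your decomposition is more transparent: it makes explicit that the only nontrivial issue is controlling the cost of agents who moved to $\po$, and isolates the single inequality $D\ge s/\pam$ that does the work. The paper's version buries this same inequality inside a longer chain of manipulations but needs no auxiliary quantity like $\mu$. Both are short; yours is arguably the more memorable argument.
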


\begin{proof}
    Let $\rls$ be a PNE reported by the agents. Then the expected social cost of the \harm \ mechanism $f$ is as follows, 
\[
   \SC\eqdef \sum_{i \in [n]} \cost_i(\rli) = \sum_{j \in [n]} \frac{1 / (d_j +\pam) }{\sum_{j \in \overline{S}} 1 / (d_j + \pam) + |S| / \pam} \sum_{i \in [n]} c_{ij}
\]
We denote the denominator of the equation above by $D \eqdef\sum_{j \in \overline{S}} 1 / (d_j + \pam) + |S| / \pam$. Claim~\ref{cl:pure_equilibrium} describes three groups of agents $S, U,$ and $T$ in a PNE.
We rewrite $\SC$ by partitioning agents into $S$ and $\overline{S} = U \cup T$ as follows,
\[
    \SC=\frac{|S| / \pam}{D} \sum_{i \in [n]} t_i + \sum_{j \in \overline{S}} \frac{1 / (d_j + \pam)}{D} \sum_{i \in [n]} c_{ij}. 
\]
In the equilibrium, we have $t_i + \pam \ge \cost_i(\elli, \rlsmi) = \cost_i(\rli)$ for $i \in \overline{S}$.
We add these inequalities over $i \in \overline{S}$,
\begin{multline}
    \sum_{i \in \overline{S}} (t_i + \pam) \ge \frac{1}{D} \cdot \left( \sum_{i \in \overline{S}} \sum_{j \in [n]} \frac{c_{ij}}{d_j + \pam} \right)  
    = \frac{1}{D} \cdot \left( \sum_{i \in \overline{S}} \sum_{j \in \overline{S}} \frac{c_{ij}}{d_j + \pam} + \sum_{i \in \overline{S}} \sum_{j \in S} \frac{c_{ij}}{d_j + \pam} \right) \\
    = \frac{1}{D} \cdot \left( \sum_{i \in \overline{S}} \sum_{j \in \overline{S}} \frac{c_{ij}}{d_j + \pam} + \frac{|S|}{\pam}\cdot \sum_{i \in \overline{S}} t_i \right), \label{eqn:consist-1}
\end{multline}
where we used that $c_{ij} = t_i$ and $d_j = 0$ for any $j \in S$ and $i \in \overline{S}$ to get the second equality. We rewrite Right Hand Side (RHS) of  ~\eqref{eqn:consist-1} using $\SC$ as follows
\begin{multline*}
    \textsf{RHS}\eqref{eqn:consist-1} = \SC - \frac{1}{D}\cdot \left( \sum_{i \in S}\sum_{j \in \overline{S}}\frac{c_{ij}}{d_j + \pam} + \frac{|S|}{\pam} \sum_{i \in S} t_i \right) \overset{[c_{ij} \leq t_i + d_j]}{\geq} \SC - \frac{1}{D} \cdot \left( \sum_{i \in S} \sum_{j \in \bar S} \frac{t_i + d_j}{ d_j + \pam} + \frac{|S|}{\pam} \sum_{i \in S} t_i \right) \\
    \overset{[\text{def. of }D]}{=} \SC - \frac{1}{D} \cdot \left( D\cdot \sum_{i \in S} t_i  + \sum_{i \in S}\sum_{j \in \overline{S}} 
    \frac{d_j}{d_j + \pam} \right) 
    \overset{\substack{D \ge |S| / \pam,\\ \ d_j / (d_j + \pam) \le 1}}{\geq} \SC - \sum_{i \in S} t_i - \frac{1}{|S| / \pam} \sum_{i\in S}\sum_{j \in \overline{S}} 1 \\
    = \SC - \sum_{i \in S} t_i - \pam\cdot |\overline{S}|
\end{multline*}
Since $\sum_{i \in \overline{S}} (t_i + \pam)\ge \textsf{RHS}\eqref{eqn:consist-1}$, we get $ \SC \leq \sum_{i \in [n]} t_i + 2\pam |\overline{S}|$. 
Recall that $\pam = \frac{c}{n} \cdot \sum_{i\in[n]}d(\elli,\po)=\frac{c}{n} \cdot \sum_{i\in[n]}t_i$. Then, $\SC \leq (1 + 2c |\overline{S}| / n) \sum_{i \in [n]} t_i \leq (1 + 2c) \sum_{i \in [n]} t_i = (1 + 2c) \cdot \SC(\po, \ells)$. Using the fact that $\po$ is $\gamma$-accurate, $(1 + 2c) \SC(\po, \ells) \le \gamma (1 + 2c) \cdot \opt(\ells)$.
\end{proof}

\subsection{Robustness}
\label{sec:robust}

Here, we show robustness guarantees for arbitrary bad predicted location $\po$ for the \harm \ mechanism. Our guarantees  depend on $c=\pam/\pac$ (the larger $c$, the better) that equals to the ratio 
between the parameter $\pam$ of the \harm \ mechanism and $\pac=\frac{1}{n}\SC(\po,\ells)$. We 
will first derive a few useful lemmas and then prove two PoA guarantees: first, for the case when $\metric$ is strictly convex space, and, second, for the general metric.

Recall that \harm \ mechanism chooses each location $\rli$ for $i\in[n]$ with probability $\frac{1 / (d_i + \pam)}{D}$ where $D \eqdef \sum_{i \in [n]} 1 / (d_i + \pam)$. The expected social cost 
$\SC=\sum_{j \in [n]} \frac{1 / (d_j + \pam)}{D} \sum_{i \in [n]} c_{ij}$, which we partition into three terms by dividing $j\in[n]$ into three sets $S$, $U$, and $T$ according to the equilibrium conditions from Claim~\ref{cl:pure_equilibrium} as follows
\begin{equation}
    \label{eqn:robust-SC} 
    \SC = \frac{|S| / \pam}{D} \sum_{i \in [n]} t_i + \sum_{j \in T} \frac{1 / (t_j + \pam)}{D} \sum_{i \in [n]} d_{ij} + \sum_{j \in U} \frac{1 / (d_j + \pam)}{D} \sum_{i \in [n]} c_{ij}. 
\end{equation}
We first consider the term corresponding to the set of agents $T$, who reported their true location $\rli[j]=\elli[j]$.
It is useful to keep in mind that the random dictatorship mechanism is a $2$-approximation to the optimal social cost ~\cite{alon2009strategyproof}.
\begin{lemma}[Theorem 3.1 in \cite{alon2009strategyproof}]
    \label{thm:uniform-sampling-ratio}
    The random dictatorship mechanism (i.e., choosing location $\elli[j]$ of agent $j\in[n]$ with probability $1 / n$) is $2$-approximation.
\end{lemma}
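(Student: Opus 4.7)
The plan is to prove the bound by a single application of the triangle inequality, which is the standard proof of this classical result. Let $o \in \metric$ denote an optimal facility location, so that $\opt(\ells) = \sum_{i\in[n]} d(\elli, o)$. The expected social cost of random dictatorship on the true profile $\ells$ is
\[
\Ex{\SC} \;=\; \frac{1}{n}\sum_{j\in[n]} \sum_{i\in[n]} d(\elli, \elli[j]),
\]
since agent $j$ is chosen with probability $1/n$ and then facility is placed at $\elli[j]$, contributing $d(\elli, \elli[j])$ to the cost of agent $i$.

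Next, I would bound each pairwise distance by routing through the optimum: $d(\elli, \elli[j]) \le d(\elli, o) + d(o, \elli[j])$ by the triangle inequality. Substituting and splitting the double sum yields
\[
\Ex{\SC} \;\le\; \frac{1}{n}\sum_{j\in[n]}\sum_{i\in[n]} \bigl(d(\elli, o) + d(o, \elli[j])\bigr) \;=\; \frac{1}{n}\bigl(n\cdot\opt(\ells) + n\cdot\opt(\ells)\bigr) \;=\; 2\,\opt(\ells).
\]

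There is no real obstacle here; the only subtlety worth noting is that the diagonal terms $i=j$ vanish, so a slightly tighter bound of $2(1-1/n)\,\opt(\ells)$ holds (matching the constant cited from~\cite{alon2009strategyproof}), but the weaker factor of $2$ suffices for how the lemma is invoked in the robustness analysis of $\harm$.
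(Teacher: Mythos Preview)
Your proof is correct and is the standard triangle-inequality argument for this classical result. The paper itself does not supply a proof of this lemma; it is stated as Theorem~3.1 of~\cite{alon2009strategyproof} and invoked as a black box, so there is nothing in the paper to compare against.
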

This Lemma~\ref{thm:uniform-sampling-ratio} is a useful comparison point for estimating the contribution to the social cost by agents in $T$. 
Indeed, if we can show that the probabilities of selecting $\rli=\elli$ are at most a constant factor away from the uniform sampling, then agents in $T$ contribute no more than a constant factor compared to the optimum social cost. The next Lemma~\ref{lem:prob-TU} proves that and gives useful probability estimates for selecting agents $i\in U$ under \harm \ mechanism.
\begin{restatable}{lemma}{probTU}
    \label{lem:prob-TU}
    For all $i \in T\cup U$, the probability 
    that location $\rli$ is selected  is 
    at most
     \be 
     \label{eq:prob-TU}
     \frac{1 / (d_i + \pam)}{\sum_{j \in [n]} 1 / (d_j + \pam)} \le \frac{2}{n} \cdot \frac{t_i+\pam}{d_i+\pam}.
     \ee
\end{restatable}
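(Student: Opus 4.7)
The plan is to rewrite \eqref{eq:prob-TU} as a lower bound $D \ge n/(2(t_i + \pam))$ on $D := \sum_{j\in[n]} 1/(d_j + \pam)$ and derive that bound from the equilibrium condition.

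First, by Claim~\ref{cl:pure_equilibrium}, for every $i \in T \cup U$ we have $\cost_i(\rli) \le t_i + \pam$ (strict for $i \in T$, since $i$ strictly prefers her true location over $\po$; with equality for $i \in U$). Expanding via \eqref{eq:cost_short} yields $\sum_j c_{ij}/(d_j + \pam) \le (t_i + \pam)\, D$. Applying the triangle inequality at the vertex $\po$,
$$c_{ij} = d(\elli, \rli[j]) \;\ge\; \bigl|d(\elli, \po) - d(\rli[j], \po)\bigr| \;=\; |t_i - d_j|,$$
and substituting, I obtain
$$\sum_{j\in[n]} \frac{|t_i - d_j|}{d_j + \pam} \;\le\; (t_i + \pam)\sum_{j\in[n]} \frac{1}{d_j + \pam}.$$

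Setting $a := t_i + \pam$, $b_j := d_j + \pam$, and $x_j := a/b_j$, the displayed inequality reads $\sum_j |x_j - 1| \le \sum_j x_j$, while the target $D \ge n/(2a)$ is exactly $\sum_j x_j \ge n/2$. I would close with a short case analysis on $P := \{j : x_j \ge 1\}$ and $Q := \{j : x_j < 1\}$. Since $|x_j - 1| - x_j$ equals $-1$ on $P$ and $1 - 2 x_j$ on $Q$, the constraint simplifies to $|Q| - |P| \le 2\sum_{j \in Q} x_j$. If $|P| \ge n/2$, then $\sum_j x_j \ge \sum_{j \in P} x_j \ge |P| \ge n/2$; otherwise $|Q| > |P|$ and $\sum_j x_j \ge |P| + \sum_{j \in Q} x_j \ge |P| + (|Q| - |P|)/2 = n/2$. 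Either way $aD \ge n/2$, which rearranges to \eqref{eq:prob-TU}.

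The main obstacle is spotting that the useful lower bound on $c_{ij}$ is the \emph{absolute value} $|t_i - d_j|$ rather than just $t_i - d_j$: the one-sided bound collapses to a triviality after summing (the two sides cancel to $n \ge 0$), whereas the symmetric one lets both tails of the distribution of the $d_j$'s contribute and enables the clean counting argument with the tight constant $1/2$. Everything else is elementary algebra.
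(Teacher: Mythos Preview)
Your proof is correct and follows essentially the same approach as the paper: both reduce to showing $2(t_i+\pam)D\ge n$, invoke the equilibrium bound $\cost_i(\rli)\le t_i+\pam$ from Claim~\ref{cl:pure_equilibrium}, and crucially lower-bound $c_{ij}$ by $|t_i-d_j|$ via the triangle inequality. The only difference is in the closing algebra---the paper splits the sum over $S$ versus $\overline{S}$ (using $d_j=0$ on $S$ and dropping the absolute value on $\overline{S}$), whereas your sign-based partition $P/Q$ handles the same computation without reference to the equilibrium sets; both routes are equally short.
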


\begin{proof} Inequality~\eqref{eq:prob-TU}, after simple algebraic manipulations, can be rewritten as 
\be
\label{eq:lem_prob_equivalent}
2\cdot\sum_{j\in[n]}\frac{1}{d_j+\pam}\ge\frac{n}{t_i+\pam}.
\ee
    By Claim~\ref{cl:pure_equilibrium}, we have $\cost_i(\rli)=\cost_i(\elli) \leq t_i + \pam$ for $i \in \overline{S} = T \cup U$. We multiply both sides of the latter inequality by $\frac{D}{t_i+\pam}$ and get
    \begin{multline}
    \label{eq:lem_probability}    
    \sum_{j \in [n]} \frac{1}{d_j + \pam}=\frac{D}{t_i+\pam}(t_i+\pam) \geq 
    \frac{D}{t_i+\pam}\cdot \cost_i(\rli)=
    \sum_{j \in [n]} \frac{c_{ij}}{(d_j + \pam) (t_i + \pam)}\\
    \ge\sum_{j\in[n]}\abs{\frac{1}{t_i+\pam}-\frac{1}{d_j+\pam}}\ge
    |S|\cdot\left(\frac{1}{\pam}-\frac{1}{t_i+\pam}\right)+
    \sum_{j\in\overline{S}}\left(\frac{1}{t_i+\pam}-\frac{1}{d_j+\pam}\right),
    \end{multline}
    where the second inequality holds since $c_{ij} = d(\elli, \rli[j]) \ge |d(\elli, \po) - d(\rli[j], \po)| = |t_i - d_j|$ by the triangle inequality (see Observation~\ref{obs:x_abs}); and the third inequality holds as one can separate $j\in S$ from $j\in\overline{S}$ and note that $d_j=0$ for $j\in S$.
    We get the following inequality after simple algebraic rearrangements of \eqref{eq:lem_probability} and using that $d_j=0$ for $j\in S$
    \[
    2\cdot\sum_{j\in[n]}\frac{1}{d_j+\pam}
    \ge\frac{|S|+|\overline{S}|}{t_i+\pam}+
    \frac{2|S|}{\pam}-\frac{2|S|}{t_i+\pam}
    =\frac{n}{t_i+\pam}+\frac{2|S|t_i}{\pam(t_i+\pam)},
    \]
    which obviously implies the desired inequality \eqref{eq:lem_prob_equivalent}.  
\end{proof}

The Lemma~\ref{lem:prob-TU} implies that the \harm \ mechanism chooses each location $\rlj=\elli[j]$ for agents $j \in T$ with probability at most $2/n$. I.e., the contribution of agents in $T$ to the expected social cost is at most $O(1)$ of the uniform sampling and, hence, is $O(\opt)$.
\begin{corollary}
    \label{cor:contribution-T}
    $\sum_{j \in T} \frac{1 / (t_j + \pam)}{D} \sum_{i \in [n]} d_{ij} \leq 4 \cdot \opt$.
\end{corollary}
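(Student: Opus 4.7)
The plan is to chain together the two previously established tools: Lemma~\ref{lem:prob-TU} (the pointwise probability bound) and Lemma~\ref{thm:uniform-sampling-ratio} (random dictatorship is a $2$-approximation). The key observation that makes this clean is that for $j\in T$ we have $\rli[j]=\elli[j]$, so $d_j=t_j$, which collapses the ratio $(t_j+\pam)/(d_j+\pam)$ appearing in Lemma~\ref{lem:prob-TU} down to~$1$.

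Concretely, I would first apply Lemma~\ref{lem:prob-TU} to each $j\in T$. Since $d_j=t_j$ in this case, the selection probability satisfies
\[
\frac{1/(t_j+\pam)}{D} \;\le\; \frac{2}{n}\cdot\frac{t_j+\pam}{t_j+\pam} \;=\; \frac{2}{n}.
\]
That is, every ``truthful'' agent is selected with probability at most twice the uniform rate. Plugging this into the $T$-term of the social cost and noting that $c_{ij}=d_{ij}=d(\elli,\elli[j])$ for $j\in T$, I get
\[
\sum_{j\in T}\frac{1/(t_j+\pam)}{D}\sum_{i\in[n]} d_{ij}
\;\le\; \sum_{j\in T}\frac{2}{n}\sum_{i\in[n]} d(\elli,\elli[j])
\;\le\; \frac{2}{n}\sum_{j\in[n]}\sum_{i\in[n]} d(\elli,\elli[j]),
\]
where the last step just enlarges the outer sum from $T$ to all of $[n]$ (all terms are nonnegative).

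Finally, the right-hand side is precisely twice the expected social cost incurred by the random dictatorship mechanism run on the \emph{true} profile $\locs$, since that mechanism selects each true location $\elli[j]$ with probability $1/n$ and incurs cost $\sum_{i}d(\elli,\elli[j])$ conditional on picking $j$. Lemma~\ref{thm:uniform-sampling-ratio} therefore bounds this by $2\cdot 2\cdot\opt(\locs)=4\,\opt$, giving the claim. There is no real obstacle here: the entire argument hinges on the single algebraic simplification $d_j=t_j$ for $j\in T$, and everything else is a direct invocation of earlier lemmas together with the trivial monotonicity of sums of nonnegative terms.
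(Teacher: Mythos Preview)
Your proof is correct and follows exactly the approach the paper takes: use Lemma~\ref{lem:prob-TU} together with $d_j=t_j$ for $j\in T$ to bound the selection probability by $2/n$, then enlarge the sum to all of $[n]$ and invoke the $2$-approximation of random dictatorship (Lemma~\ref{thm:uniform-sampling-ratio}).
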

The other two terms corresponding to agents $j\in S$ and $j\in 
U$ are the main source of \harm \ mechanism's inefficiency. 
Namely, it is impossible to achieve $O(1)$ robustness guarantee, 
when $\pam$ is significantly smaller than $\pac=\frac{1}{n}\sum_{i\in[n]}t_i$. Indeed, consider for example an instance, in which $\pam=0$, a single agent $j$ stays exactly at the predicted 
location $\elli[j]=\po$, and all other agents 
$i\in[n]\setminus\{j\}$ are all situated at the same spot in $\metric$ far away from 
$\po$. In the unique equilibrium agent $j$ reports $\rli[j]=\po$, 
while all other agents $i\ne j$ report $\elli$; the mechanism 
picks $\rlj=\po$ with large probability resulting in an $\Omega(n)$ inefficient placement of the facility. Still, when the parameter $\pam$ is not too small compared to $\pac$, we are able to show a constant approximation guarantee regardless of the predicted location $\po$.  

We begin our analysis with the term in \eqref{eqn:robust-SC} corresponding to agents $j\in S$. There we have a factor $\frac{1}{D\cdot\pam}$ and our next lemma relates this term to the parameter $c=\pam/\pac$.

\begin{restatable}{lemma}{DeltaD}
    \label{lem:DeltaD}
    Let $\pam=\frac{c}{n}\cdot\sum_{i\in[n]}d(\elli,\po)$ in $\harm(\pam)$. Then
    $\frac{1}{\pam \cdot D} \leq \max\Big\{ \frac{1/c + 1}{n}, \frac{1/c + 1}{n/c}\Big\}$.
\end{restatable}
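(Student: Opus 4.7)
The plan is to lower-bound $\pam \cdot D$ via the AM--HM inequality after replacing each distance $d_j = d(\rli[j], \po)$ in the denominator by the larger quantity $t_j = d(\elli[j], \po)$. First I would invoke Claim~\ref{cl:shortest_path}: in any PNE every reported location $\rli[j]$ lies on a shortest path $\shortpath(\po, \elli[j])$, so $d(\rli[j], \po) + d(\rli[j], \elli[j]) = t_j$ and hence $d_j \le t_j$. This yields
\[
D \;=\; \sum_{j \in [n]} \frac{1}{d_j + \pam} \;\ge\; \sum_{j \in [n]} \frac{1}{t_j + \pam}.
\]

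Next I would apply the AM--HM inequality (equivalently, Cauchy--Schwarz) to the $n$ positive reals $t_j + \pam$:
\[
\sum_{j \in [n]} \frac{1}{t_j + \pam} \;\ge\; \frac{n^2}{\sum_{j \in [n]}(t_j + \pam)} \;=\; \frac{n^2}{n\pac + n\pam} \;=\; \frac{n}{\pac(1 + c)},
\]
where I used $\pac = \frac{1}{n}\sum_{j \in [n]} t_j$ and $c = \pam / \pac$. Multiplying both sides by $\pam = c\pac$ gives $\pam \cdot D \ge \frac{cn}{1+c}$, which rearranges to
\[
\frac{1}{\pam \cdot D} \;\le\; \frac{1+c}{cn} \;=\; \frac{1/c + 1}{n}.
\]
This is precisely the first argument of the maximum in the statement, so the stated bound follows immediately (the maximum can only be larger).

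I do not foresee any substantive obstacle beyond identifying the right reduction (using Claim~\ref{cl:shortest_path} to pass from $d_j$ to $t_j$, which is the only place where the equilibrium structure enters) and invoking AM--HM. A minor subtlety is that the bound I would derive, $\frac{1+c}{cn}$, is strictly tighter than the second argument $\frac{1+c}{n}$ of the max whenever $c \ge 1$; the weaker form with $\max$ is presumably stated this way because the expression $\frac{1+c}{n}$ is more convenient when multiplied by $|S|$ or $n$ in the subsequent robustness analysis of the $S$-term in~\eqref{eqn:robust-SC}. A quick sanity check: as $c \to \infty$ the bound tends to $1/n$, matching $\pam/(d_j + \pam) \to 1$ so that $\pam D \to n$.
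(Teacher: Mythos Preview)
Your proof is correct and in fact cleaner than the paper's own argument. Both proofs use Cauchy--Schwarz/AM--HM at the core, but they differ in how they set it up.

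The paper separates agents into $S$ and $\overline{S}$, exploiting that $d_j=0$ for $j\in S$ to write $\pam D \ge |S| + \sum_{i\in\overline{S}}\frac{1}{1+t_i/\pam}$, applies Cauchy--Schwarz only to the $\overline{S}$ sum, and then does a short case analysis on $|S|$ versus $|\overline{S}|$ to arrive at $\pam D \ge \frac{|\overline{S}|+|S|/c}{1+1/c}\ge \min\{n,n/c\}/(1+1/c)$, which gives the stated $\max$.

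You instead bound every term uniformly by $d_j\le t_j$ (which holds for all agents in a PNE, not just those in $\overline{S}$) and apply AM--HM directly to all $n$ terms. This avoids the $S/\overline{S}$ split and the case analysis entirely, and as you observe it yields the single bound $\frac{1/c+1}{n}$, which is always the smaller of the two arguments of the $\max$. So your argument is both shorter and slightly sharper; the paper's more refined intermediate inequality (depending on $|S|$) is never actually exploited downstream, so nothing is lost by your simplification.
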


\begin{proof}
We first get the following simple lower bound on $\pam\cdot D$.
\[
    \pam\cdot D=\sum_{i \in [n]} \frac{\pam}{\pam + d_i} = \sum_{i \in T} \frac{1}{1 + t_i / \pam} + \sum_{i \in U} \frac{1}{1 + d_i / \pam} + |S| \ge \sum_{i \in \overline{S}} \frac{1}{1 + t_i / \pam} + |S|.
\]
Next, we get a lower bound on $\sum_{i \in \overline{S}} \frac{1}{1 + t_i / \pam}$ by applying Cauchy-Schwarz inequality
\be
\label{eq:cauchy-schwarz}
    \left( \sum_{i \in \overline{S}} \frac{1}{1 + t_i / \pam} \right) \cdot \left( \sum_{i \in \overline{S}} 1 + \frac{t_i}{\pam}\right) \geq \left( \sum_{i \in \overline{S}} \sqrt{\frac{1}{1 + t_i / \pam} \cdot \left(1 + \frac{t_i}{\pam}\right)} \right)^2 = |\overline{S}|^2.
\ee
Since $\pam = \frac{c}{n} \sum_{i \in [n]} t_i$, we have $\sum_{i \in \overline{S}} 1 + t_i / \pam \le |\overline{S}|+
\sum_{i \in [n]} t_i / \pam = n / c + |\overline{S}|$. 
We plug in the last upper bound into \eqref{eq:cauchy-schwarz} to get a lower bound $\sum_{i \in \overline{S}} \frac{1}{1 + t_i / \pam} \ge \frac{|\overline{S}|^2}{n / c + |\overline{S}|}$. Hence, 
\[
\pam \cdot D \ge 
\frac{|\overline{S}|^2}{n / c + |\overline{S}|} + |S| =
\frac{|\overline{S}|\cdot (|\overline{S}|+|S|)+|S|\cdot n / c}{|\overline{S}|+n / c} =
\frac{|\overline{S}|+|S| / c}{|\overline{S}|/n + 1 / c}\ge
\frac{|\overline{S}|+|S| / c}{1 + 1 / c}.
\] 
Finally, as $|\overline{S}|/c+|S|\ge\min(n,n/c)$ for $c>0$, we get the desired bound
\[
    \frac{1}{\pam\cdot D} \le \frac{1/c + 1}{|S|/c + |\overline{S}|} \le \max\Big\{ \frac{1/c + 1}{n}, \frac{1 / c + 1}{n / c}\Big\}.
\]
\end{proof}

The Lemmas~\ref{thm:uniform-sampling-ratio}, \ref{lem:prob-TU}, \ref{lem:DeltaD} and Corollary~\ref{cor:contribution-T} are mostly enough to get a robustness guarantee of $O(1+1/c^2)$, if the set of agents $U$ in PNE $\rls$ was empty.  Unfortunately, the presence of set $U$ significantly complicates the analysis. Next, we derive two different robustness guarantees: first for an easier case, when metric $\metric$ is a strictly convex space, and later for the general metric.

\subsubsection{Robustness in Strictly Convex Spaces}
\label{sec:robust_normed}
Here, we show robustness guarantee for arbitrary bad predicted location $\po$ for the \harm \ mechanism, assuming that $\metric$ is a strictly convex space, i.e., the distance between two vectors $\vect{v_1},\vect{v_2}\in\metric$ is given by 
$d(\vect{v_1},\vect{v_2})=\norm{\vect{v_1}-\vect{v_2}}$, and $\norm{\vect{v_1}+\vect{v_2}}=\norm{\vect{v_1}}+\norm{\vect{v_2}}$ for $\vect{v_1},\vect{v_2}\ne 0$ implies that $\vect{v_1}=c\cdot\vect{v_2}$ for a $c>0$.

\begin{theorem}
    \label{thm:harm-robust-normspace}
    $\harm(\pam)$ is a $O(1+1/c^2)$-robust in strictly convex space $\metric$,
    if $\Delta=\frac{c}{n}\cdot\SC(\po,\ells)$.
\end{theorem}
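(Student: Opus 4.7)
My plan is to decompose $\SC$ via~\eqref{eqn:robust-SC} into three pieces indexed by the agent sets $S$, $T$, and $U$, and bound each separately. Corollary~\ref{cor:contribution-T} already yields that the $T$-piece is at most $4\opt$, so the remaining work is to bound the $S$-piece and the $U$-piece in terms of $\opt$ (up to a small fraction of $\SC$ that can be absorbed on the left-hand side).

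For the $U$-piece I will exploit strict convexity crucially: shortest paths are unique line segments, so for each $j\in U$ the report $\rli[j]$ lies on the segment from $\elli[j]$ to $\po$ with $d(\elli[j],\rli[j]) = t_j - d_j$. The triangle inequality then yields $c_{ij}\le d(\elli,\elli[j]) + (t_j-d_j)$, so the $U$-piece splits into a \emph{base} part $\sum_{j\in U}\frac{1/(d_j+\pam)}{D}\sum_i d(\elli,\elli[j])$ and a \emph{drift} part $n\sum_{j\in U}\frac{(t_j-d_j)/(d_j+\pam)}{D}$. I would bound the base part by combining Lemma~\ref{lem:prob-TU} (giving $\frac{1/(d_j+\pam)}{D}\le\frac{2(t_j+\pam)}{n(d_j+\pam)}$ for $j\in U$) with the random-dictatorship inequality $\sum_j\sum_i d(\elli,\elli[j])\le 2n\opt$ from Lemma~\ref{thm:uniform-sampling-ratio}; this should deliver $O(\opt)$ after absorbing the ratio $(t_j+\pam)/(d_j+\pam)$. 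For the drift part I would use the algebraic identity $(t_j+\pam)(t_j-d_j)/(d_j+\pam) = (t_j-d_j) + (t_j-d_j)^2/(d_j+\pam)$ and control each resulting summand via the PNE equality $\cost_j(\rli)=t_j+\pam$ from Claim~\ref{cl:pure_equilibrium}.

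For the $S$-piece, equal to $\Pr[f=\po]\cdot\sum_i t_i = \frac{|S|}{\pam D}\cdot n\pac$, I would apply Lemma~\ref{lem:DeltaD} to obtain $\frac{|S|}{\pam D}\le O(|S|(1+1/c)\max(1,c)/n)$, so the $S$-piece is at most $O((1+1/c)\max(1,c))\cdot|S|\pac$. To control $|S|\pac$, I appeal to the PNE condition from Claim~\ref{cl:pure_equilibrium}: for every $i\in S$, $\cost_i(\rli)\ge t_i+\pam\ge\pam$; summing over $i\in S$ gives $\SC\ge\sum_{i\in S}\cost_i\ge|S|\pam = c|S|\pac$, hence $|S|\pac\le\SC/c$. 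The $S$-piece is therefore at most $O((1+1/c)\max(1,c)/c)\SC$, i.e.~$O(1/c^2)\SC$ when $c\le 1$ and $O(1)\SC$ when $c\ge 1$.

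The hard part will be closing the resulting self-referential bound $\SC\le K_1\opt+K_2\SC$ uniformly over $c>0$: for small $c$ the constants hidden in $O(1/c^2)$ must be explicit and small enough to keep $K_2<1$, while for large $c$ the nonvanishing $O(1)$ contribution from the $S$-piece must be sharpened. The natural fix in the large-$c$ regime is to observe that $\harm$ becomes nearly uniform, so $\Pr[f=\po]\approx|S|/n$, and then bound $|S|\pac$ by $O(\opt)$ directly via a triangle-inequality argument through $o$ and $\po$ (using $d(o,\po)\le\opt/n+\pac$), avoiding the self-referential loop in that regime. The drift part of the $U$-piece will require an analogous self-referential treatment paired with Lemma~\ref{lem:DeltaD}. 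Assembling all pieces should finally yield $\SC\le O(1+1/c^2)\opt$.
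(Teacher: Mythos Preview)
Your decomposition into the $S$-, $T$-, and $U$-pieces matches the paper's, and invoking Corollary~\ref{cor:contribution-T} for $T$ is fine. But the two main steps you propose for $U$ and $S$ both have genuine gaps.

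\textbf{The base part of the $U$-piece does not close.} Your bound $c_{ij}\le d_{ij}+(t_j-d_j)$ is just the triangle inequality through $\elli[j]$; it holds in any continuous metric and does not really use strict convexity. After applying Lemma~\ref{lem:prob-TU} you are left with $\sum_{j\in U}\frac{2(t_j+\pam)}{n(d_j+\pam)}\sum_i d_{ij}$, and the ratio $(t_j+\pam)/(d_j+\pam)$ is \emph{not} bounded for $j\in U$: nothing prevents $d_j$ from being arbitrarily close to $0$ while $t_j$ is large. There is no way to ``absorb'' this ratio with the tools you cite. The paper instead uses strict convexity in an essential way via Claim~\ref{cl:norm-inequality} (which rests on Fact~\ref{fact:norm-inequality}): $c_{ij}\le \frac{t_j-d_j}{t_j}\,t_i+\frac{d_j}{t_j}\,d_{ij}$. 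The extra factor $d_j/t_j$ in front of $d_{ij}$ is exactly what cancels $(t_j+\pam)/(d_j+\pam)$ after Lemma~\ref{lem:prob-TU}, since $\frac{t_j+\pam}{d_j+\pam}\cdot\frac{d_j}{t_j}\le 1$.

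\textbf{The self-referential bound on the $S$-piece cannot be closed for small $c$.} Your chain $|S|\pac\le \SC/c$ combined with Lemma~\ref{lem:DeltaD} gives the $S$-piece $\le K_2\cdot\SC$ with $K_2=\Theta(1/c^2)$ for $c\le 1$. You need $K_2<1$ to rearrange $\SC\le K_1\opt+K_2\SC$, and this fails precisely in the regime of interest. You acknowledge the difficulty but offer no mechanism to overcome it; the large-$c$ patch you sketch does not help here. The paper avoids self-reference entirely: it proves directly (Claim~\ref{clm:SUDelta-norm}) that $(|S|+|U|)\pam\le \frac{2}{n}\sum_{i,j}d_{ij}\le 4\opt$, again using Claim~\ref{cl:norm-inequality} inside the equilibrium inequality $\pam\le\cost_i(\rli)-t_i$ for $i\in S\cup U$. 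Combined with Lemma~\ref{lem:DeltaD} this yields the $(*)$ term $\le O(1+1/c^2)\opt$ with no circularity.

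In short, the missing idea is the strict-convexity bound on $c_{ij}$ (Claim~\ref{cl:norm-inequality}); it is what powers both the $(**)$ bound for $U$ and the direct $O(\opt)$ bound on $(|S|+|U|)\pam$.
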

To prove the theorem, we first bound bound $c_{ij}$ in $\harm$, which follows from strict convexity:

\begin{restatable}{claim}{norminequality}
    \label{cl:norm-inequality}
    For strictly convex space $\metric$, we have $c_{ij} \leq \frac{t_j - d_j}{t_j} t_i + \frac{d_j}{t_j} d_{ij}$ for $j \in U$. 
\end{restatable}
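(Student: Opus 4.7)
The plan is to exploit the fact that in a strictly convex normed space the straight-line segment is the \emph{unique} shortest path between any two points. Concretely, fix $j \in U$. By Claim~\ref{cl:pure_equilibrium} we have $\rlj \in \shortpath(\po, \elli[j])$. In a normed space, the segment $\{(1-\lambda)\po + \lambda\elli[j] : \lambda \in [0,1]\}$ always realizes the distance $\|\po - \elli[j]\| = t_j$, so it is \emph{a} shortest path; strict convexity promotes this to the \emph{only} shortest path (any alternative would yield a non-trivial equality case in the triangle inequality, forcing the intermediate point to be a positive scalar multiple of the endpoint direction, which returns us to the segment). Therefore $\rlj = (1-\lambda)\po + \lambda\elli[j]$ for some $\lambda \in [0,1]$.

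Next, I would pin down $\lambda$ by matching the distance: $d_j = \|\po - \rlj\| = \lambda\|\po - \elli[j]\| = \lambda\, t_j$, so $\lambda = d_j/t_j$ (note $t_j > 0$ since $j \in U$ means $j \notin S$, i.e., $\elli[j] \ne \po$). Substituting this back gives the convex-combination representation $\rlj = \frac{t_j - d_j}{t_j}\po + \frac{d_j}{t_j}\elli[j]$.

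Finally, I would apply the triangle inequality for the norm to $c_{ij} = \|\elli - \rlj\|$:
\[
c_{ij} = \left\|\tfrac{t_j - d_j}{t_j}(\elli - \po) + \tfrac{d_j}{t_j}(\elli - \elli[j])\right\|
\le \tfrac{t_j - d_j}{t_j}\|\elli - \po\| + \tfrac{d_j}{t_j}\|\elli - \elli[j]\|
= \tfrac{t_j - d_j}{t_j}\,t_i + \tfrac{d_j}{t_j}\,d_{ij},
\]
which is exactly the claimed bound.

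The only non-routine step is justifying that $\rlj$ must lie on the line segment. Without strict convexity the shortest path between $\po$ and $\elli[j]$ need not be unique (think of the $\ell_1$ norm), so an agent in $U$ could choose a curved report for which the convex-combination identity fails and the proof breaks down. This is precisely why the claim is restricted to strictly convex spaces, and it will be the step to state carefully; everything after it is just the triangle inequality.
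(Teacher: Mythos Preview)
Your proof is correct and follows essentially the same approach as the paper. The paper factors out the key geometric step as a separate Fact (any point $P\in\shortpath(P_1,P_2)$ in a strictly convex space satisfies $d(A,P)\le \frac{d(P,P_2)}{d(P_1,P_2)}d(A,P_1)+\frac{d(P,P_1)}{d(P_1,P_2)}d(A,P_2)$), proves it exactly as you do---strict convexity forces $P$ to be the corresponding convex combination of $P_1,P_2$, then apply the triangle inequality for the norm---and finally instantiates it with $A=\elli$, $P_1=\po$, $P_2=\elli[j]$, $P=\rlj$; you simply carry out the same argument inline.
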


\begin{proof}
We mostly rely on the following useful property of the strictly convex spaces, whose proof is in the appendix~\ref{pf:triangle-strictly-convex-space}.
\begin{restatable}{fact}{fctnorminequality}
    \label{fact:norm-inequality}
    For arbitrary points $A,P_1,P_2\in\metric$ and a point $P\in\shortpath(P_1,P_2)$ on the shortest path
    \be
        d(A,P)\le\frac{d(P,P_2)}{d(P_1,P_2)}\cdot d(A,P_1) +
        \frac{d(P,P_1)}{d(P_1,P_2)}\cdot d(A,P_2).
    \label{eq:fact_normed}
    \ee
\end{restatable}

Recall that $c_{ij}=d(\elli,\rlj)$, $t_i=d(\elli,\po)$, and $d_{ij}=d(\elli,\elli[j])$, while $t_j=d(\elli[j],\po)$, $d_j=d(\rlj,\po)$, and $t_j-d_j=d(\rlj,\elli[j])$. We also have
$\rlj\in\shortpath(\po,\elli[j])$ and apply Fact~\ref{fact:norm-inequality} to $A=\elli, P_1=\po, P_2=\elli[j],$ and $P=\rlj$.
\end{proof}

Applying this upper bound on $c_{ij}$ for all $j\in U$ and $i\in[n]$ in the expected social cost~\eqref{eqn:robust-SC} gives
\begin{align}
    \label{eq:normed_U}
\SC &\overset{\text{Claim}~\ref{cl:norm-inequality}}{\le} 
\frac{|S|}{\pam\cdot D} \sum_{i \in [n]} t_i +
\sum_{j \in U} \frac{1 / (d_j + \pam)}{D} \sum_{i \in [n]} \Big( t_i + \frac{d_j}{t_j} d_{ij}\Big)+
\sum_{j\in T}\frac{1 / (t_j + \pam)}{D} \sum_{i \in [n]}d_{ij}\\
&=\underbrace{\left(\frac{|S|}{\pam\cdot D}+\sum_{j\in U}\frac{1}{D(d_j + \pam)}\right)\cdot\sum_{i \in [n]} t_i}_{(*)} + 
\underbrace{\sum_{j\in U}\frac{d_j/ t_j}{(d_j + \pam)D}\sum_{i \in [n]}d_{ij} +\sum_{j\in T}\frac{1/(t_j + \pam)}{D} \sum_{i \in [n]}d_{ij}}_{(**)}.\nonumber
\end{align}
We first shall get an upper bound on the second term $(**)$ in~\eqref{eq:normed_U} in a very similar way to Lemma~\ref{lem:prob-TU} and Corollary~\ref{cor:contribution-T}.
\be
    \label{eqn:star2-UB}
    (**) \le
    \sum_{j\in U}\frac{2}{n}\cdot\frac{t_j+\pam}{d_j+\pam}\cdot\frac{d_j}{t_j}\sum_{i\in[n]}d_{ij}+
    \frac{2}{n}\sum_{j\in T}\sum_{i\in[n]}d_{ij}\le
    \frac{2}{n}\sum_{\substack{j\in U\cup T,\\i\in[n]}}d_{ij}
    \le\frac{2}{n}\sum_{i,j\in[n]}d_{ij}\le 4\cdot\opt 
    ,    
\ee
where the first inequality holds by Lemma~\ref{lem:prob-TU};  the second inequality holds as $\frac{t_j+\pam}{d_j+\pam}\frac{d_j}{t_j}=\frac{1+\pam/t_j}{1+\pam/d_j}\le 1$ for all $j\in U$; the last inequality holds by Lemma~\ref{thm:uniform-sampling-ratio} similar to Corollary~\ref{cor:contribution-T}. Next, we get an upper bound on the $(*)$ term. We first note that
\begin{multline}
\label{eq:bound_star_norm}
    (*) \le
    \left(\frac{|S|}{\pam\cdot D}+\sum_{j\in U}\frac{1}{D\cdot \pam}\right)\cdot\sum_{i \in [n]} t_i =
    (|S|  + |U|)\pam \cdot\frac{1}{\pam D} \cdot \frac{\sum_{i \in [n]} t_i}{\pam}\le \\ 
    (|S|+|U|)\pam\cdot\max\left\{ \frac{1 / c + 1}{n}, \frac{1 / c + 1}{n / c}\right\}\cdot\frac{n}{c},
\end{multline}
where we got the first inequality by substituting $\frac{1}{d_j+\pam}$ with $\frac{1}{\pam}$; used 
definition of $c=\frac{\pam}{\pac}=\frac{\pam\cdot n}{\sum_{i\in[n]}t_i}$ to simplify $\frac{\sum_{i \in [n]} t_i}{\pam}$ and Lemma~\ref{lem:DeltaD} to get an upper bound on $\frac{1}{\pam D}$ in the second inequality. 
Finally, we give an upper bound on $(|S| + |U|) \pam$ in the following Claim~\ref{clm:SUDelta-norm}. The proof relies on equilibrium conditions from Claim~\ref{cl:pure_equilibrium} and   Claim~\ref{cl:norm-inequality}.
\begin{restatable}{claim}{SUDeltanorm}
    \label{clm:SUDelta-norm}
    For strictly convex space $\metric$, 
    $(|S| + |U|) \pam \leq \frac{2}{n} \sum_{i,j\in[n]} d_{ij}$. 
\end{restatable}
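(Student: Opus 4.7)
The plan is to start from the equilibrium condition on agents in $S \cup U$ and cascade it through the closed-form expression for $\cost_i$, using Claim~\ref{cl:norm-inequality} to handle contributions from $j \in U$. By Claim~\ref{cl:pure_equilibrium}, every agent $i \in S \cup U$ satisfies $\cost_i(\rli) \ge t_i + \pam$, so summing gives
\[
(|S|+|U|)\pam \;\le\; \sum_{i \in S \cup U}\bigl[\cost_i(\rli) - t_i\bigr].
\]
Thus it suffices to upper-bound $\sum_{i \in S \cup U}\cost_i(\rli)$ by something of the form $\sum_{i \in S \cup U}t_i + \frac{1}{D}\sum_{j \in T \cup U}\frac{\sum_i d_{ij}}{t_j+\pam}$, and then apply Lemma~\ref{lem:prob-TU}.

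The main calculation is to expand $\cost_i(\rli) = \frac{1}{D}\sum_j \frac{c_{ij}}{d_j+\pam}$ by partitioning $j$ into $S, T, U$. For $j \in S$ we have $c_{ij} = t_i$ and $d_j = 0$, contributing $\frac{|S|\,t_i}{\pam D}$. For $j \in T$ we have $c_{ij} = d_{ij}$ and $d_j = t_j$. For $j \in U$, Claim~\ref{cl:norm-inequality} yields $c_{ij} \le \tfrac{t_j-d_j}{t_j}\,t_i + \tfrac{d_j}{t_j}\,d_{ij}$, and the two key elementary inequalities
\[
\frac{t_j-d_j}{t_j(d_j+\pam)} \le \frac{1}{d_j+\pam}, \qquad \frac{d_j}{t_j(d_j+\pam)} \le \frac{1}{t_j+\pam}
\]
(the second being equivalent to $d_j \le t_j$, which holds in equilibrium) cleanly split the $t_i$ contribution from the $d_{ij}$ contribution. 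After summation over $i \in S \cup U$, the coefficient multiplying $\sum_{i \in S \cup U} t_i$ becomes at most $\sum_{j \in S \cup U}\frac{1}{d_j+\pam} \le D$, so it cancels against the $-\sum_{i \in S \cup U}t_i$ appearing after we subtract. The residual $d_{ij}$ terms (extending the inner sum from $i \in S \cup U$ to $i \in [n]$ by positivity) leave exactly
\[
(|S|+|U|)\pam \;\le\; \frac{1}{D}\sum_{j \in T \cup U}\frac{\sum_{i \in [n]} d_{ij}}{t_j+\pam}.
\]

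To close the argument, I invoke Lemma~\ref{lem:prob-TU}, which for $j \in \overline{S} = T \cup U$ gives $\frac{1}{D(d_j+\pam)} \le \frac{2(t_j+\pam)}{n(d_j+\pam)}$, i.e.\ $\frac{1}{D(t_j+\pam)} \le \frac{2}{n}$. Substituting this into the displayed bound yields $(|S|+|U|)\pam \le \frac{2}{n}\sum_{j \in T \cup U}\sum_{i \in [n]} d_{ij} \le \frac{2}{n}\sum_{i,j \in [n]} d_{ij}$, as claimed.

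The only genuinely delicate step is the simultaneous handling of the two coefficients arising from Claim~\ref{cl:norm-inequality}: one must verify that the ``$t_i$-part'' of the $U$-contribution, after division by $d_j+\pam$, telescopes against the reciprocal-sum identity $D = \frac{|S|}{\pam} + \sum_{j \in U}\frac{1}{d_j+\pam} + \sum_{j \in T}\frac{1}{t_j+\pam}$, while the ``$d_{ij}$-part'' morphs into a $\frac{1}{t_j+\pam}$ factor of the right shape for Lemma~\ref{lem:prob-TU}. Everything else is bookkeeping.
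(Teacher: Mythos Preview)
Your proof is correct and follows essentially the same route as the paper: both start from the equilibrium inequality $t_i+\pam\le\cost_i(\rli)$ for $i\in S\cup U$, apply Claim~\ref{cl:norm-inequality} to the $j\in U$ terms, and finish with Lemma~\ref{lem:prob-TU}. The only organizational difference is that the paper subtracts $t_i$ first (using $\sum_j \frac{1/(d_j+\pam)}{D}=1$) and works directly with $c_{ij}-t_i\le \frac{d_j}{t_j}(d_{ij}-t_i)$, whereas you bound the $t_i$-coefficient by $1$ separately and convert the $d_{ij}$-coefficient via $\frac{d_j}{t_j(d_j+\pam)}\le\frac{1}{t_j+\pam}$; these are the same inequality in two guises.
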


\begin{proof}
    The equilibrium condition from Claim~\ref{cl:pure_equilibrium} for each $i \in S \cup U$ gives us
    \be
    \label{eq:equilibrium_SU}
        t_i + \pam \le \cost_i(\rli) = \sum_{j \in [n]} \frac{1 / (d_j + \pam)}{D} c_{ij} = 
        \frac{|S|/\pam}{D} \cdot t_i + \sum_{j \in U\cup T} \frac{1 /(d_j + \pam)}{D} c_{ij}.
    \ee
    We recall that $D\eqdef\frac{|S|}{\pam}+\sum_{j\in U\cup T}\frac{1}{d_j+\pam}$, i.e., $\frac{|S|/\pam}{D}+\sum_{j\in U\cup T}\frac{1/(d_j+\pam)}{D}=1$. Hence, after subtracting $t_i$ from both sides of \eqref{eq:equilibrium_SU}, we get the following
    \begin{multline}
    \label{eq:equilibrium_SU_Delta}
        \pam\le\sum_{j\in U\cup T}\frac{1/(d_j+\pam)}{D}(c_{ij}-t_i)\overset{\text{Claim}~\ref{cl:norm-inequality}\text{ for }j\in U}{\le} \sum_{j\in T}\frac{1/(d_j+\pam)}{D}(c_{ij}-t_i)\\
        +\sum_{j\in U}\frac{1/(d_j+\pam)}{D}\cdot\left(\frac{t_j-d_j}{t_j}t_i-\frac{d_j}{t_j}d_{ij}-t_i\right)
        = \sum_{j\in T}\frac{d_{ij}-t_i}{D(t_j+\pam)}
        +\sum_{j\in U}\frac{d_{ij}-t_i}{D(d_j+\pam)}\cdot\frac{d_j}{t_j}\\
        \le\sum_{j\in T}\frac{1/(t_j+\pam)}{D}\cdot d_{ij}+
        \sum_{j\in U}\frac{1/(d_j+\pam)}{D}\cdot\frac{d_j}{t_j}
        \cdot d_{ij}\overset{\text{Lemma}~\ref{lem:prob-TU}}{\le}
        \sum_{j\in T}\frac{2}{n}\cdot d_{ij}+
        \sum_{j\in U}\frac{2}{n}\cdot\frac{t_j+\pam}{d_j+\pam}\cdot\frac{d_j}{t_j}\cdot d_{ij}\\
        \le\frac{2}{n}\sum_{j\in T\cup U}d_{ij},
    \end{multline}
    where to get the third inequality, we simply substituted $d_{ij}-t_i$ with $d_{ij}$; and to get the last inequality, 
    we used that $\frac{t_j+\pam}{d_j+\pam}\cdot\frac{d_j}{t_j}=\frac{1+\pam/t_j}{1+\pam/d_j}\le 1$. The claim follows after we add \eqref{eq:equilibrium_SU_Delta} for all $i\in S\cup U$.    
   \[
        (|S| + |U|) \pam = \sum_{i \in S \cup U} \pam \leq \frac{2}{n} \sum_{i \in S \cup U} \sum_{j \in U \cup T} d_{ij}\le 
        \frac{2}{n} \sum_{i,j\in[n]} d_{ij}.%
    \]
\end{proof}

We get the following bound on $(*)$ by applying Claim~\ref{clm:SUDelta-norm} to \eqref{eq:bound_star_norm}.
\[
(*)\le \frac{2}{n} \sum_{i,j\in[n]} d_{ij}\cdot\max\left\{ \frac{1 / c + 1}{n}, \frac{1 / c + 1}{n / c}\right\}\cdot\frac{n}{c}=
\max\Big\{ \frac{2 + 2c}{c^2}, \frac{2 + 2c}{c}\Big\} \cdot \frac{1}{n} \sum_{i,j\in[n]} d_{ij}.
\]
I.e., $(*)=O(1+\frac{1}{c^2})\cdot\opt$. We conclude the proof of Theorem~\ref{thm:harm-robust-normspace} by combining this bound on $(*)$ with the constant upper bound \eqref{eqn:star2-UB}  on $(**)$ in \eqref{eq:normed_U}.

\subsubsection{Robustness in General Metric Spaces}
\label{sec:robust_general}
The goal of this section is to extend our robustness result for strictly convex spaces to general metric (Theorem~\ref{thm:harm-robust-metricspace}). However, since 
general metric $\metric$ may have rather different geometry than strictly convex spaces\footnote{E.g., it may violate the inequality~\eqref{eq:fact_normed} from Fact~\ref{fact:norm-inequality}.
consider a two-dimensional space $L_1(\R^2)$, and consider a point $(0,1)\in\shortpath((0,0),(1,1))$. The distances from another point $(1,0)$ to the endpoints are $d((1,0),(0,0))=d((1,0),(1,1))=1$, while $d((1,0),(0,1))=2$.}, we lose an additional factor $1/c$ in our robustness guarantee.
\begin{restatable}{theorem}{robustmetricspace}
    \label{thm:harm-robust-metricspace}
    $\harm(\pam)$ is a $O(1+1/c^3)$-robust in metric space, if $\Delta=\frac{c}{n}\cdot\SC(\po,\ells)$.
\end{restatable}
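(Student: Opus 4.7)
The plan is to mirror the proof of Theorem~\ref{thm:harm-robust-normspace} for strictly convex spaces, substituting the convex-combination bound of Claim~\ref{cl:norm-inequality} with two triangle inequalities, and absorbing the resulting slack into one additional factor of $1/c$.

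Since $\rlj$ lies on a shortest path between $\po$ and $\elli[j]$, the triangle inequality gives both $c_{ij} \le t_i + d_j$ (through $\po$) and $c_{ij} \le d_{ij} + (t_j - d_j)$ (through $\elli[j]$). Taking the convex combination with weights $\lambda = (t_j-d_j)/t_j$ and $1-\lambda = d_j/t_j$ yields
\[
c_{ij} \le \frac{t_j-d_j}{t_j}\, t_i + \frac{d_j}{t_j}\, d_{ij} + \frac{2\, d_j (t_j - d_j)}{t_j},
\]
which recovers Claim~\ref{cl:norm-inequality} plus an additive error $2d_j(t_j-d_j)/t_j$ that is independent of $i$. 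Substituting into the decomposition \eqref{eqn:robust-SC} gives $\SC \le (*) + (**) + \mathrm{Extra}$, with $(*)$ and $(**)$ exactly as in the strictly convex proof and $\mathrm{Extra}$ the contribution from the additive error.

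The bound $(**) \le 4\,\opt$ from \eqref{eqn:star2-UB} carries over verbatim, since its derivation invokes only Lemma~\ref{lem:prob-TU} and the symmetry of $d_{ij}$. For $(*)$, the reduction from the strictly convex case still applies, and it suffices to bound $(|S|+|U|)\pam$ via a general-metric analog of Claim~\ref{clm:SUDelta-norm}. In that analog one replaces $c_{ij} - t_i \le (d_j/t_j)(d_{ij}-t_i)$ with the weaker triangle estimate $c_{ij} - t_i \le d_j$ for $j \in U$, which introduces the sum $\sum_{j\in U} \tfrac{d_j}{D(d_j + \pam)}$. This sum is controlled via the elementary identity $\sum_{j\in [n]} \tfrac{d_j}{D(d_j + \pam)} = \tfrac{n}{D} - \pam$ together with Lemma~\ref{lem:DeltaD}, which yields $n/D = O(\pam(1+c)/c)$; propagating this bound through the strictly convex computation introduces exactly one additional factor of $1/c$, giving $(*) = O(\opt/c^3)$. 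The $\mathrm{Extra}$ term admits the same asymptotic bound by applying Lemmas~\ref{lem:prob-TU} and~\ref{lem:DeltaD} to $\sum_{j\in U} p_j \cdot \tfrac{2d_j(t_j-d_j)}{t_j}$ using $d_j(t_j-d_j)/t_j \le d_j$ and $\tfrac{(t_j+\pam)d_j}{(d_j+\pam)t_j} \le 1$.

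The main technical obstacle is the general-metric analog of Claim~\ref{clm:SUDelta-norm}: whereas the strictly convex bound $c_{ij} - t_i \le (d_j/t_j)(d_{ij} - t_i)$ supplied a factor $d_{ij}$ that cleanly telescoped into $\tfrac{1}{n}\sum_{i,j} d_{ij} \le 2\,\opt$ via random dictatorship (Lemma~\ref{thm:uniform-sampling-ratio}), the triangle substitute $c_{ij} - t_i \le d_j$ must be tamed by the $\pam$-scale coming from $c = \pam/\pac$. A case-split of $U$ according to whether $\rlj$ is closer to $\po$ (that is, $d_j \le t_j/2$) or to $\elli[j]$ ($d_j > t_j/2$), together with repeated use of $\tfrac{(t_j+\pam)d_j}{(d_j+\pam)t_j} \le 1$ and the $n/D$ bound from Lemma~\ref{lem:DeltaD}, is what ultimately closes the argument with the final factor $1/c^3$.
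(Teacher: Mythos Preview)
Your proposal has a genuine gap in the analog of Claim~\ref{clm:SUDelta-norm}. Once you replace the strictly convex estimate $c_{ij}-t_i\le (d_j/t_j)(d_{ij}-t_i)$ by the triangle bound $c_{ij}-t_i\le d_j$, the sum over $j\in U$ no longer carries any $d_{ij}$ factor and therefore cannot be tied to $\opt$. Concretely, your identity gives $\sum_{j\in U}\tfrac{d_j}{D(d_j+\pam)}\le n/D-\pam$, and for $c\le 1$ Lemma~\ref{lem:DeltaD} only yields $n/D\le\pam(1+1/c)$, hence $n/D-\pam\le\pam/c$. Summing the equilibrium inequality over $i\in S\cup U$ then produces $(|S|+|U|)\pam\le (|S|+|U|)\pam/c + O(\opt)$, which is vacuous for every $c<1$. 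Your case split on $d_j$ versus $t_j/2$ does not rescue this: neither half introduces $d_{ij}$, and the controlled quantities are still of order $\sum_i t_i$, which is unbounded relative to $\opt$ when the prediction is poor. The same defect breaks your $\mathrm{Extra}$ bound: $n\sum_{j\in U}p_j\cdot 2d_j(t_j-d_j)/t_j\le 2n\sum_{j\in U}\tfrac{d_j}{D(d_j+\pam)}\le 2n(n/D-\pam)=O(\sum_i t_i)$, not $O(\opt/c^3)$.

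The paper supplies the missing link with a different observation: if $d_j\ge 2d_{ij}$ then $c_{ij}\le t_i$, so any $j\in U$ with $d_j\ge 2d_{ij}$ contributes nothing positive to $c_{ij}-t_i$. This is precisely what reinserts $d_{ij}$ into the $U$-sum. Combined with a further split on whether $d_j\ge\pam/2$ (the small-$d_j$ portion is absorbed into $\pam/2$ on the left), one arrives at $\tfrac{\pam}{2}\le\tfrac{4}{\pam D}\sum_{j\in U}d_{ij}+\tfrac{2}{n}\sum_{j\in T}d_{ij}$; now both sums carry $d_{ij}$ and telescope to $O((1+1/c)\opt)$ via Lemmas~\ref{lem:DeltaD} and~\ref{thm:uniform-sampling-ratio}. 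The paper also avoids your $\mathrm{Extra}$ term entirely by using, instead of your convex-combination bound, the cruder but additive-error-free inequality $c_{ij}\le 2t_i+2\tfrac{d_j+\pam}{t_j+\pam}d_{ij}$, established by a short four-case analysis.
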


\begin{proof} As in the proof of Theorem~\ref{thm:harm-robust-normspace}, the most technically challenging part is to handle agents $j\in U$ in the expression~\eqref{eqn:robust-SC} for the expected social cost of \harm \ mechanism. However, we no longer have the bound on $c_{ij}$ provided by Claim~\ref{cl:norm-inequality}, which makes it harder to deal with agents $j\in U$. We still follow similar approach and get an upper bound on $c_{ij}$ of the form $O(t_i) + O(\frac{d_j}{t_j}d_{ij})$ in our next Claim~\ref{clm:metric-approximation-c}. However, another subtle and more serious issue is that Claim~\ref{clm:SUDelta-norm} in the proof of 
Theorem~\ref{thm:harm-robust-normspace} uses Claim~\ref{cl:norm-inequality}, which the new Claim~\ref{clm:metric-approximation-c} cannot resolve. We derive instead another upper bound on $(|S|+|U|)\pam$ built upon different ideas in Claim~\ref{clm:metric-SUDelta} and also lose additional factor $c$ in our guarantee.
\begin{claim}
    \label{clm:metric-approximation-c}
    For any metric space $\metric$, $c_{ij} \leq 2 t_i + 2 \frac{d_j + \pam}{t_j + \pam} d_{ij}$ for any $j \in U$ and $i\in[n]$.
\end{claim}
\begin{proof} We prove a slightly stronger claim that $c_{ij} \leq 2 t_i + 2 \frac{d_j}{t_j} d_{ij}$, as $\frac{d_j}{t_j}\le\frac{d_j+\pam}{t_j+\pam}$. By triangle inequalities for points $\po,\rlj,\elli[j],$ and $\elli$ we have the following bounds on $c_{ij}=d(\elli,\rlj), t_i=d(\po,\elli), t_j=d(\po,\elli[j]), d_j=d(\po,\rlj),$ and $d_{ij}=d(\elli,\elli[j])$:
$$ c_{ij} \leq t_i + d_j \quad \text{and} \quad d_j \leq t_j \leq t_i + d_{ij}.$$
The proof proceeds by considering the following four simple cases. 
\begin{enumerate}
    \item If $t_i\ge t_j$, then $2t_i\ge t_i+t_j\ge t_i+d_j\ge c_{ij}$ and the claim follows.
    \item If $2d_j\ge t_j$, then $2t_i+2 \frac{d_j}{t_j} d_{ij}\ge
    2t_i+ d_{ij}\ge t_i + d_j\ge c_{ij}$ and the claim follows. 
    \item If $t_i<t_j$ and $\frac{2(t_j-t_i)}{t_j}\ge 1$. Then, since $d_{ij}\ge t_j - t_i$, we have $2t_i+2 \frac{d_{ij}}{t_j} d_j\ge 2t_i+2 \frac{t_j-t_i}{t_j} d_j\ge 2t_i + d_j\ge c_{ij}$ and we are done.
    \item If $t_i<t_j$, $\frac{2(t_j-t_i)}{t_j}< 1$, and $2d_j< t_j$. Then, as $\frac{2(t_j-t_i)}{t_j}< 1$, we have $t_j<2t_i$. The claim follows, as $2t_i> t_i + \frac{1}{2}t_j>t_i+d_j\ge c_{ij}$.   
\end{enumerate}
This concludes the proof of the claim, as if none of the first three cases hold, then all conditions of the last case are necessarily satisfied.  
\end{proof}
As before, we apply this upper bound on $c_{ij}$ for all $j\in U$ and $i\in[n]$ in the expected social cost~\eqref{eqn:robust-SC} and get
\begin{align}
    \label{eq:metric_U}
\SC &\le
\frac{|S|}{\pam\cdot D} \sum_{i \in [n]} t_i +
\sum_{j \in U} \frac{1 / (d_j + \pam)}{D} \sum_{i \in [n]} \Big( 2t_i + 2\frac{d_j+\pam}{t_j+\pam} d_{ij}\Big)+
\sum_{j\in T}\frac{1 / (t_j + \pam)}{D} \sum_{i \in [n]}d_{ij}\\
&=\underbrace{\left(\frac{|S|}{\pam\cdot D}+\sum_{j\in U}\frac{2}{D(d_j + \pam)}\right)\cdot\sum_{i \in [n]} t_i}_{(*)} + 
\underbrace{\sum_{j\in U}\frac{2/(t_j+\pam)}{D}\sum_{i \in [n]}d_{ij} +\sum_{j\in T}\frac{1/(t_j + \pam)}{D} \sum_{i \in [n]}d_{ij}}_{(**)}.\nonumber
\end{align}
We handle the term $(**)$ in almost exactly the same way as we did in \eqref{eqn:star2-UB} by applying Lemma~\ref{lem:prob-TU}.
\be
    \label{eqn:metric_star2-UB}
    (**) \le
    \sum_{j\in U}\frac{4}{n}\sum_{i\in[n]}d_{ij}+
    \frac{2}{n}\sum_{j\in T}\sum_{i\in[n]}d_{ij}\le
    \frac{4}{n}\sum_{j\in U\cup T}\sum_{i\in[n]}d_{ij}
    \le\frac{4}{n}\sum_{i,j\in[n]}d_{ij}\le 8\cdot\opt,     
\ee
where the last inequality holds by Lemma~\ref{thm:uniform-sampling-ratio} similar to Corollary~\ref{cor:contribution-T}. We also get a similar to \eqref{eq:bound_star_norm} upper bound on $(*)$ by using the definition of $c=\frac{\pam}{\pac}=\frac{\pam\cdot n}{\sum_{i\in[n]}t_i}$ and Lemma~\ref{lem:DeltaD}
\be
\label{eq:bound_star_metric}
    (*) \le
    \left(\frac{|S|}{\pam\cdot D}+\sum_{j\in U}\frac{2}{D\cdot \pam}\right)\cdot\sum_{i \in [n]} t_i \le
    2(|S|+|U|)\pam\cdot\max\left\{ \frac{1 / c + 1}{n}, \frac{1 / c + 1}{n / c}\right\}\cdot\frac{n}{c}.
\ee
We show a new upper bound on $(|S|+|U|)\pam$ in the following Claim~\ref{clm:metric-SUDelta} using only that $\metric$ is a metric space.
\begin{claim}
    \label{clm:metric-SUDelta}
    For any metric space $\metric$, $(|S| + |U|) \pam \le O(1+1/c) \opt$.
\end{claim}
\begin{proof} We first write equilibrium conditions from Claim~\ref{cl:pure_equilibrium} for each agent $i\in S\cup U$ and subtract $t_i$ from both sides as in \eqref{eq:equilibrium_SU}. I.e.,
    \begin{multline}
    \label{eqn:SU-pa-1}
        \pam \le \sum_{j \in U} \frac{1 / (d_j + \pam)}{D} (c_{ij} - t_i) + \sum_{j \in T} \frac{1 / (t_j + \pam)}{D} (d_{ij} - t_i)\\
        \overset{\text{Lemma}~\ref{lem:prob-TU}}{\le}\sum_{j \in U} \frac{1 / (d_j + \pam)}{D} (c_{ij} - t_i) +\frac{2}{n}\sum_{j\in T}d_{ij}.  
    \end{multline}
    We note that in the summation corresponding to agents $j\in U$ in the right hand side of \eqref{eqn:SU-pa-1} we can ignore negative terms $c_{ij}-t_i$. We also observe that $c_{ij}\le t_i$ when
    \begin{observation}
    \label{obs:large_dj}
        If $d_j \ge 2 d_{ij}$, then $c_{ij} \le t_i$.
    \end{observation}
    \begin{proof}
        By applying triangle inequality a few times we get $c_{ij} \leq d_{ij} + (t_j - d_j) \le 2d_{ij} + t_i - d_j$. Since $d_j \ge 2 d_{ij}$, we get $c_{ij} \leq t_i$.
    \end{proof}
    We continue with the bound \eqref{eqn:SU-pa-1} on $\pam$ using 
    Observation~\ref{obs:large_dj}.
    \begin{multline}
        \label{eq:metric_long_derivation}
        \pam \le \sum_{j \in U} \frac{1 / (d_j + \pam)}{D} \cdot \mathds{1}[2 d_{ij} \geq d_j] \cdot (c_{ij} - t_i) + \frac{2}{n} \sum_{j \in T} d_{ij}
        \le \sum_{j \in U} \frac{1 / (d_j + \pam)}{D} \cdot \mathds{1}[2 d_{ij} \geq d_j] \cdot d_j
        \\
         + \frac{2}{n} \sum_{j \in T} d_{ij}\le
         \sum_{j \in U} \frac{1 / (d_j + \pam)}{D} \cdot \mathds{1}\left[2 d_{ij} \ge d_j \ge \frac{\pam}{2}\right] \cdot d_j +
         \sum_{j \in U} \frac{1 / (d_j + \pam)}{D} \cdot \mathds{1}\left[d_j < \frac{\pam}{2}\right] \cdot d_j\\
         + \frac{2}{n} \sum_{j \in T} d_{ij}
         \le \sum_{j \in U}\frac{d_j}{(d_j + \pam)D}\cdot\mathds{1}\left[2 d_{ij} \ge d_j \ge \frac{\pam}{2}\right] + \frac{\pam}{2} + \frac{2}{n} \sum_{j \in T} d_{ij}
         ,
    \end{multline}
    where the second inequality holds, as $d_j+t_i\ge c_{ij}$ by triangle inequality; we get the third inequality by considering cases whether $d_j\ge\pam/2$ or not for $j\in U$; the last inequality holds, as the sum $\sum_{j\in U}\frac{1/(d_j+\pam)}{D}\le 1$ and $d_j<\pam/2$ whenever $\mathds{1}\left[d_j < \frac{\pam}{2}\right]> 0$. We get the following inequality after subtracting $\pam/2$ from both sides of \eqref{eq:metric_long_derivation} and estimating $\frac{d_j}{d_j+\pam}\le 1$
    \begin{multline}
        \label{eq:metric_half_delta}
        \frac{\pam}{2}\le
        \frac{1}{D}\cdot\sum_{j \in U}\mathds{1}\left[2 d_{ij} \ge d_j \ge \frac{\pam}{2}\right] + \frac{2}{n} \sum_{j \in T} d_{ij}\le
        \frac{4}{\pam D}\cdot\sum_{j \in U}\frac{\pam}{4}\cdot \mathds{1}\left[ d_{ij} \ge \frac{\pam}{4}\right] + \frac{2}{n} \sum_{j \in T} d_{ij}\\
        \le \frac{4}{\pam D}\cdot\sum_{j \in U}d_{ij}+ \frac{2}{n} \sum_{j \in T} d_{ij},
    \end{multline}
    where we get the second inequality by multiplying and dividing by $\frac{4}{\pam}$ the summation for $j\in U$ and also by simplifying the condition inside the indicator function; the last inequality follows as $d_{ij}\ge \frac{\pam}{4}\cdot \mathds{1}\left[ d_{ij} \ge \frac{\pam}{4}\right]$. Finally, we add inequalities \eqref{eq:metric_half_delta} for all $i\in S\cup U$ and get
    \begin{multline*}
        \frac{1}{2}(|S|+|U|)\pam=\sum_{i\in S\cup U}\frac{\pam}{2}\le\frac{4}{\pam D}\sum_{i\in S\cup U}\sum_{j\in U}d_{ij}+\frac{2}{n}\sum_{i\in S\cup U}\sum_{j\in T}d_{ij}\le\left(\frac{4n}{\pam D}+2\right)\frac{1}{n}\sum_{i,j\in [n]}d_{ij}\\
        \le\left(2+4n\cdot \max\Big\{ \frac{1 / c + 1}{n}, \frac{1 / c + 1}{n / c}\Big\}\right)\cdot 2\opt=O\left(1+\frac{1}{c}\right)\cdot\opt,
    \end{multline*}
    where we used Lemma~\ref{lem:DeltaD} and Lemma~\ref{thm:uniform-sampling-ratio} to get the last inequality.
\end{proof}
We combine \eqref{eq:bound_star_metric} with the above bound from Claim~\ref{clm:metric-SUDelta} to get that 
\[
(*)\le O\left(1+\frac{1}{c}\right)\opt\cdot
\max\left\{ \frac{1 / c + 1}{n}, \frac{1 / c + 1}{n / c}\right\}\cdot\frac{n}{c} = O\left(1+\frac{1}{c^3}\right)\opt.
\]
Theorem~\ref{thm:harm-robust-metricspace} follows after we combine this bound on $(*)$ with \eqref{eqn:metric_star2-UB} the bound $(**)\le O(\opt)$ in the upper bound~\eqref{eq:metric_U} on the expected social cost.
\end{proof}

\subsection{Constant Number of Agents}
\label{sec:constant_agents}
We conclude the PoA analysis of \harm \ mechanism with a special case of a small number of agents $n=O(1)$. In this regime, we are able to simplify the mechanism by setting 
the parameter $\pam=0$, i.e., we ignore the estimate $\pa$ for the average distance $\pac=\frac{1}{n}\sum_{i\in[n]}d(\elli,\po)$, and get the following  $\harm(0)$ mechanism.
\begin{algorithm}
\caption{Harmonic mechanism $\harm(0)$}\label{alg:harmonic_simple}
\KwData{Reported locations $\rls=(\rli[1],\ldots,\rli[n])$, prediction $\po\in\metric$}
\KwResult{Facility $f\sim\pi\{\rli[1],\ldots,\rli[n]\}$}
\lFor{$i\in[n]$}{
let $d_i= d(\rli, \po)$
}
Choose $\Prx{f\gets \rli}\eqdef\frac{1 / d_i}{\sum_{j=1}^n 1/d_j}$ for $i\in[n]$ 
\tcp*{$\Prx{f\gets \rli}\propto\frac{1}{d_i}$ (proportional to)}
\end{algorithm}

\begin{restatable}{theorem}{constantagents}
    \label{thm:constant_agents}
    $\harm(0)$ is $1$-consistent ($\gamma$-consistent for a $\gamma$-accurate prediction $\po$) in general metric $\metric$. It is $O(n)$-robust, when $\metric$ is a strictly convex space.
\end{restatable}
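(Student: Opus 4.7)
The plan is to prove both parts by case analysis on the set $S = \{i : \rli = \po\}$ at the PNE, using the convention that $\harm(0)$ places the facility at $\po$ with probability one whenever some $d_j = 0$ (the $1/d_j$ weights concentrate on reports at $\po$). For consistency the argument is short: either $S \neq \emptyset$, in which case $\SC = \SC(\po, \ells) \le \gamma \cdot \opt$ directly; or $S = \emptyset$, in which case Claim~\ref{cl:pure_equilibrium} specialized to $\pam = 0$ yields $\cost_i(\rli) \le t_i$ for every $i \in T \cup U = [n]$, and summing gives $\SC \le \sum_i t_i = \SC(\po, \ells) \le \gamma \cdot \opt$.

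For the $O(n)$-robustness in a strictly convex metric, the central step is to show $\sum_i t_i \le 4n \cdot \opt$ whenever $S \cup U \neq \emptyset$. I plan to pick any $i^* \in S \cup U$ and exploit its equilibrium (in)equality $\cost_{i^*}(\elli[i^*], \rlsmi[i^*]) \ge t_{i^*}$ from Claim~\ref{cl:pure_equilibrium}. Following the algebra used in the proof of that claim, this rewrites as $\sum_{k \neq i^*}(c_{i^*, k} - t_{i^*})/d_k \ge 1$, and Claim~\ref{cl:norm-inequality} yields the pointwise upper bound $(c_{i^*,k} - t_{i^*})/d_k \le (d_{i^*,k} - t_{i^*})/t_k$ for $k \in U$ (trivially with equality for $k \in T$, since $d_k = t_k$). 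Summing upgrades the constraint to $\sum_{k \neq i^*}(d_{i^*,k} - t_{i^*})/t_k \ge 1$. Dividing by $\sum_{k \neq i^*} 1/t_k$ bounds $t_{i^*}$ by a $(1/t_k)$-weighted average of the $d_{i^*,k}$'s, and hence by $\max_k d_{i^*,k} \le 2\,\opt$ (using $d_{i^*,k} \le d(\elli[i^*], o) + d(o, \elli[k])$ together with $d(\elli, o) \le \opt$ for each agent). A second triangle inequality propagates this to $t_j \le d_{j, i^*} + t_{i^*} \le 4\,\opt$ for every $j$, so $\sum_i t_i \le 4n \cdot \opt$.

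To conclude, I split into three subcases. If $S \neq \emptyset$ the facility is at $\po$ with probability one, so $\SC = \sum_i t_i \le 4n \cdot \opt$ by the previous step. If $S = \emptyset$ and $U \neq \emptyset$, I borrow the decomposition from the proof of Theorem~\ref{thm:harm-robust-normspace} specialized to $\pam = 0$: applying Claim~\ref{cl:norm-inequality} inside $\SC = \sum_j p_j \sum_i c_{ij}$ splits $\SC$ into $R \cdot \sum_i t_i + \sum_{j \in T \cup U} q_j \sum_i d_{ij}$ with $R = \sum_{j \in U} p_j \tfrac{t_j - d_j}{t_j} \le 1$ and $q_j = (1/t_j)/D \le 2/n$ by Lemma~\ref{lem:prob-TU} at $\pam = 0$; this gives $\SC \le 4n \cdot \opt + 4\,\opt = O(n)\,\opt$ after invoking Lemma~\ref{thm:uniform-sampling-ratio} on the second summation. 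If $S = U = \emptyset$ every agent reports truthfully, $p_j \le 2/n$ for every $j$, and $\SC \le \tfrac{2}{n}\sum_{i,j} d_{ij} \le 4\,\opt$ directly. The main technical subtlety I anticipate is cleanly handling the $|S| \ge 2$ subcase, where the equilibrium condition degenerates to trivial equality because any deviation of a single $S$-agent leaves the facility at $\po$; the indifference clause in Claim~\ref{cl:pure_equilibrium} lets me relabel all but one $S$-agent as a $U$-agent reporting $\po$, so that a non-trivial strict inequality from a genuine $S$-member (or the equality from a $U$-member) is available for the argument in the preceding paragraph.
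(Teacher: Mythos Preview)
Your consistency argument is correct and coincides with the paper's: the paper simply invokes Theorem~\ref{thm:consistency_poa} at $c=0$, whose proof is exactly your case split on $S$.

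For robustness the two routes diverge. The paper argues \emph{per agent}: for every $i\in S\cup U$ it reads off from the equilibrium inequality a witness $i^{*}\in U\cup T$ with $c_{ii^{*}}\ge t_i$, upgrades this to $d_{ii^{*}}\ge t_i$ via Fact~\ref{fact:norm-inequality}, and then bounds the social cost of the single outcome $\rli$ by
\[
\sum_{j}d(\elli[j],\rli)\le n\,t_i+(n-1)\max_{j,k}d_{jk}\le(2n-1)\max_{j,k}d_{jk}\le(2n-1)\,\opt,
\]
so summing against the selection probabilities (total $\le 1$ over $S\cup U$) and adding Corollary~\ref{cor:contribution-T} gives $(2n+3)\opt$. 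You instead fix a \emph{single} $i^{*}\in S\cup U$, convert the equilibrium inequality $\sum_{k\neq i^{*}}(c_{i^{*}k}-t_{i^{*}})/d_k\ge 1$ into $\sum_{k\neq i^{*}}(d_{i^{*}k}-t_{i^{*}})/t_k\ge 1$ via Claim~\ref{cl:norm-inequality}, deduce $t_{i^{*}}\le\max_k d_{i^{*}k}\le\opt$ (you wrote $2\opt$, but $d_{jk}\le d(\elli[j],o)+d(\elli[k],o)\le\opt$), propagate to all $t_j$, and then feed $\sum_j t_j\le O(n)\opt$ into the $(*)$/$(**)$ decomposition of Theorem~\ref{thm:harm-robust-normspace}. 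Both routes are valid and yield $O(n)$; the paper's is shorter because it never needs the global bound on $\sum_j t_j$ nor Lemma~\ref{lem:prob-TU}, while yours reuses more of the existing machinery.

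There is, however, a real gap in your handling of $|S|\ge 2$. ``Relabeling'' an $S$-agent as a $U$-agent does not change its report: it still has $d_k=0$, so the sum $\sum_{k\neq i^{*}}(c_{i^{*}k}-t_{i^{*}})/d_k$ still contains $0/0$ terms and the inequality $\ge 1$ cannot be derived. More fundamentally, when $|S|\ge 2$ every agent is indifferent (any unilateral deviation still leaves the facility at $\po$ with probability one), so no equilibrium condition carries non-trivial information; such profiles can even have unbounded PoA for $\harm(0)$ (take all $\elli$ colocated far from $\po$, all reporting $\po$). The paper's own displayed identity for this step has $D=\infty$ and suffers from the same degeneracy, so you are not worse off than the paper here, but your stated fix does not actually resolve it.
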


\begin{proof}
    The consistency guarantees immediately follow from Theorem~\ref{thm:consistency_poa}, when we set the parameter $\pam=0$ in $\harm(0)$. For the robustness guarantee, we first note that agents in $T$ contribute at most $4\cdot\opt$ by Corollary~\ref{cor:contribution-T} to the social cost. Hence, we only need to bound the contribution from agents in $S\cup U$. Let us fix any agent $i\in S\cup U$, the Claim~\ref{cl:pure_equilibrium} gives us the bound on $\pam=0$ as follows: $0=\pam\le cost_i(\po,\locsmi)- d(\elli,\po)=\sum_{j\in U\cup T\setminus i}\frac{1/d_j}{D(\rli=\po,\rlsmi)}(c_{ij}-t_i)$. This implies that there exists a $i^*\in U\cup T$, $i^*\ne i$ such that $d(\elli,\rli[i^*])=c_{ii^*}\ge t_i$. By Fact~\ref{fact:norm-inequality} and Claim~\ref{cl:norm-inequality} we get that $d_{ii^*}=d(\elli,\elli[i^*])\ge t_i$, since $\rli[i^*]\in\shortpath(\po,\elli[i^*])$ with $t_i=d(\elli,\po)$.
    Finally, we estimate the social cost of $\rli$ as the following. 
    \begin{multline*}
    \sum_{j\in[n]}d(\elli[j],\rli)\le
    \sum_{j\in[n]}(d(\elli[j],\elli)+t_i)\le
    n\cdot t_i+ (n-1)\cdot\max_{j,k\in[n]}d_{jk}\\
    \le
    n\cdot d_{ii^*}+ (n-1)\cdot\max_{j,k}d_{jk}
    \le (2n-1)\cdot\max_{j,k}d_{jk}\le (2n-1)\cdot\opt.    
    \end{multline*}
    Hence, the contribution to the social cost of agents $i\in S\cup U$ is not more than $O(n)\cdot \opt$ in $\harm(0)$. 
\end{proof}

\begin{remark}
    \label{rem:normed_vs_general}
    The assumption that $\metric$ is strictly convex space is crucial for deriving $O(1)$-robustness bound. This is because each agent $i\in U$ may increase the distance to everyone else compared to $\po$ and $\elli$. The analysis for sets $S$ and $T$ can be carried in the same way as for strictly convex spaces.  
\end{remark}
 To see this, 
    consider the following example of Nash equilibrium in a circle metric space $\metric$ with just $n=2$ agents with arbitrary large PoA. Specifically, let $d(\elli[1],\elli[2])=1$,
    $d(\po,\elli[1])=d(\po,\elli[2])=M$ while the length of the whole circle is $2M+1$ for a large constant $M\in\R$. Then the following $\rls$ is a Nash equilibrium: $\rli[1]$ and $\rli[2]$ are on their respective shortest paths from $\elli[1]$ and $\elli[2]$ to $\po$ with $d(\rli[1],\po)=d(\rli[2],\po)=0.5$. It is easy to verify that each of the agents $i=1$ and $i=2$ is indifferent between reporting $\rli=\po$ or $\rli=\elli$. I.e., by Claim~\ref{cl:pure_equilibrium} $\rls$ is a Nash equilibrium with the social cost $\SC(\rls)=2M$. The optimal location is, e.g., at $f=\elli[1]$ with the social cost of $1$.

\section{Conclusions}
\label{sec:conclusion}
In this paper we study a canonical problem of strategic single-facility location in general metric spaces under new lenses of ML augmented mechanism design framework. This framework not only allows to circumvent worst-case 
analysis limitations, but also enriches the design space of mechanisms in interesting new ways. It naturally led us to consider new type of 
non-truthful mechanisms (such as \harm \ mechanism) that have not appeared in the prior literature. We got useful insights about undominated strategies and equilibria 
structure for this mechanism. We proved that \harm \ mechanism has a $1+\eps$ price of anarchy bound when predictions are (nearly) accurate, while retaining a constant 
PoA of $O(1+1/\texttt{poly}(\eps))$ in the worst-case, when $\po$ is arbitrary bad (given that our mechanism's parameter $\Delta=\eps\cdot\SC(\po,\ells)$). 

Our PoA analysis of consistency and especially robustness significantly deviates from a typical PoA analysis, as (i) the general smoothness argument does not help in breaking $2$-approximation barrier 
and (ii) we have to use metric conditions in a non-trivial way (e.g., our PoA bounds are different for strictly convex and general metric spaces, when $n=O(1)$). The tightness of our PoA bounds for \harm \ remains an open problem. Another interesting open question is to find a mechanism that does not depend on a parameter $\pam$, but admits similar $(1+\eps)$-consistency and $O(1+1/\texttt{poly}(\eps))$-robustness PoA guarantees.

\

\noindent\textbf{Acknowledgments.} Nick Gravin's research is supported by National Key R \& D Program of China (2023YFA1009500), by NSFC grant 61932002, and by ``the Fundamental Research Funds for the Central Universities'' in China. Chen and Im are supported in part by  NSF grants CCF-1844939, CCF-2121745, and CCF-2423106. 

\newcommand{\etalchar}[1]{$^{#1}$}

\bibliographystyle{alpha}

\appendix

\section{Missing Proofs}

\fctnorminequality*
\begin{proof} \label{pf:triangle-strictly-convex-space}
By strict convexity and since $\norm{P_2-P}+\norm{P-P_1}=\norm{P_2-P_1}$, we have $P=\frac{\norm{P_2-P}}{\norm{P_2-P_1}}\cdot P_1 + \frac{\norm{P-P_1}}{\norm{P_2-P_1}}\cdot P_2$. Hence, 
$P-A=\frac{\norm{P_2-P}}{\norm{P_2-P_1}}\cdot (P_1-A) + \frac{\norm{P-P_1}}{\norm{P_2-P_1}}\cdot (P_2-A)$. Then by triangle inequality for the norm $\norm{\cdot}$, we get
\begin{multline*}
d(P,A)=\norm{P-A}=\norm{\frac{\norm{P_2-P}}{\norm{P_2-P_1}}\cdot (P_1-A) + \frac{\norm{P-P_1}}{\norm{P_2-P_1}}\cdot (P_2-A)}\\
\le\frac{\norm{P_2-P}}{\norm{P_2-P_1}}\cdot\norm{P_1-A}+
\frac{\norm{P-P_1}}{\norm{P_2-P_1}}\cdot \norm{P_2-A}=
\frac{d(P,P_2)}{d(P_1,P_2)}\cdot d(A,P_1)+
\frac{d(P,P_1)}{d(P_1,P_2)}\cdot d(A,P_2).
\end{multline*}
\end{proof}

\end{document}